\newtheorem{theorem}{Theorem}
\newtheorem{corollary}{Corollary}
\theoremstyle{definition}
\newtheorem{defn}{Definition}
\newtheorem{example}{Example}
\theoremstyle{remark}
\newtheorem{rem}{Remark}
\theoremstyle{assumption}
\newtheorem{assump}{Assumption}
\theoremstyle{fact}
\newtheorem{fact}{Fact}
\theoremstyle{claim}
\theoremstyle{prob}
\newtheorem{prob}{Problem}
\theoremstyle{algo}
\theoremstyle{experiment}
\numberwithin{equation}{section}
\newcommand{\abs}[1]{\left\lvert{#1}\right\rvert}
\newcommand{\norm}[1]{\left\lVert#1\right\rVert}
\newcommand{\pmat}[1]{\begin{pmatrix}#1\end{pmatrix}}
\newcommand{\R}{\mathbb{R}}
\newcommand{\N}{\mathbb{N}}
\renewcommand{\P}{\mathcal{Q}}
\newcommand{\Sw}{\mathcal{S}}
\newcommand{\A}{\mathcal{A}}
\newcommand{\mb}{\overline{m}}
\newcommand{\toone}{\overset{1}{\to}}
\newcommand{\totwo}{\overset{2}{\to}}
\newcommand{\kb}{\overline{\kappa}}
\newcommand{\cb}{\overline{\chi}}
\newcommand{\tor}{\overset{r}{\to}}
\title[]{On stabilizability of switched linear systems\\under restricted switching}
\author{Atreyee Kundu}
\address{Department of Electrical Engineering,\\Indian Institute of Science Bangalore,\\Bengaluru - 560012, India,\\ E-mail: atreyeek@iisc.ac.in}
\keywords{Switched systems, Stabilizability, Matrix commutators, Graph theory}
\date{\today}
\begin{document}

	\begin{abstract}
         This paper deals with stability of discrete-time switched linear systems whose all subsystems are unstable and the set of admissible switching signals obeys pre-specified restrictions on switches between the subsystems and dwell times on the subsystems. We derive sufficient conditions on the subsystems matrices such that a switched system is globally exponentially stable under a set of purely time-dependent switching signals that obeys the given restrictions. The main apparatuses for our analysis are (matrix) commutation relations between certain products of the subsystems matrices and graph-theoretic arguments.
    \end{abstract}

    \maketitle
%===============================================================================
\section{Introduction}
\label{s:intro}
\subsection{The setting}
\label{ss:prob_setting}
    A \emph{switched system} has two ingredients --- a family of systems and a switching signal. The \emph{switching signal} selects an \emph{active subsystem} at every instant of time, i.e., the system from the family that is currently being followed \cite[\S1.1.2]{Liberzon2003}. It is well-known that a switched system does not necessarily inherit qualitative properties of its constituent subsystems. For instance, divergent trajectories may be generated by switching appropriately among stable subsystems while a suitably constrained switching signal may ensure stability of a switched system even if all subsystems are unstable (see e.g., \cite[p.\ 19]{Liberzon2003} for examples with two subsystems). Due to such interesting features, stability of switched systems constitutes a key topic in the literature.

    In this paper we focus on discrete-time switched linear systems whose all subsystems are unstable and the set of switching signals obeys pre-specified restrictions on admissible switches between the subsystems and admissible dwell times on the subsystems. Such restrictions are inherent to many engineering applications, see e.g., \cite[Remark 1]{kun2018} for examples. We call a switched system \emph{stabilizable} if there exists an admissible switching signal under which the system is globally exponentially stable. Given the set of admissible switches between the subsystems and the admissible minimum and maximum dwell times on the subsystems, we find sufficient conditions on the subsystems matrices such that a switched system under consideration is stabilizable.
%============================
%============================
\subsection{Prior works}
\label{ss:prior_works}
    Stabilizability of switched systems under unrestricted switching has attracted considerable research attention in the literature.\footnote{By ``unrestricted switching'', we mean that the switches between the subsystems and the dwell times on the subsystems are unrestricted, unlike the setting of this paper.} Given a set of linear unstable subsystems, the problem of deciding whether (or not) there exists a switching signal that stabilizes the resulting switched system, in general, belongs to the class of NP-hard problems, see \cite{Skafidas1999,Vlassis2014} for results and discussions. On the one hand, necessary and sufficient conditions for this problem is proposed only recently in \cite{Fiacchini2014}. Verifying the condition of \cite{Fiacchini2014} involves checking the containment of a set in the union of other sets, and hence possesses inherent computational complexity. On the other hand, sufficient conditions for determining the existence of a stabilizing switching signal are plenty in the literature. Stability of a switched system in the setting of the so-called min-switching signals \cite{Liberzon2003} is studied in \cite{Geromel2006}. The subsystems matrices are required to satisfy a set of bilinear matrix inequalities (BMIs) called the Lyapunov-Metzler inequalities, for stability under these switching signals. In \cite{Sun2011} the existence of min-switching signals is claimed to be necessary and sufficient for exponential stabilizability of a switched system. Algebraic relations of Lyapunov-Metzler inequalities with classical S-procedure characterization \cite{Lure1944} are explored recently in \cite{Heemels2017}. Generalized versions of both sets of inequalities along with their relations to the classical ones are studied in \cite{Kundu2017, Fiacchini2016}. While the above class of results depends on state-dependent switching signals, in \cite{Fiacchini2016} stability of a switched system is also addressed under purely time-dependent periodic switching signals based on the satisfaction of a set of linear matrix inequalities (LMIs). The proposed LMI conditions are equivalent to the generalized versions of Lyapunov-Metzler inequalities, and is implied by the classical Lyapunov-Metzler inequalities \cite{Fiacchini2016}.

    Stabilization of a switched linear system under a set of switching signals restricted by the language of a non-deterministic finite automaton \cite{Hopcroft} is addressed recently in \cite{Fiacchini2018}. Such an automaton captures a large class of constraints on the switching signals. The authors present an algorithm to design stabilizing state-feedback switching signals. It is shown using geometry of certain sets that the termination of this algorithm is a necessary and sufficient condition for recurrent stabilizability, which in turn is a sufficient condition for stabilizability of a switched system.
%=====================
%=====================
\subsection{Our contributions}
\label{ss:contri}
    In this paper our main contributions are twofold:
    \begin{itemize}[label = \(\circ\), leftmargin = *]
        \item first, we employ purely time-dependent switching signals that are not restricted to periodic constructions in the study of stabilizability of switched linear systems, and
        \item second, we transcend beyond the regime of Lyapunov functions and/or matrix inequalities based stabilizability conditions and rely on properties of Euclidean norms of (matrix) commutators (Lie brackets) of certain products of the subsystems matrices for this purpose.%, \(A_{\ell}\), \(\ell\in\P\), for this purpose.
    \end{itemize}

    We assume that there exist two subsystems that form a Schur stable combination including values from the admissible dwell time interval. If switches between these two subsystems are unrestricted, then the switched system under consideration admits a stabilizing periodic switching signal. We address the general setting where switches between the two subsystems forming a stable combination may be restricted. Towards this end, we associate a directed graph to the switched system: the vertices of this graph are the indices of the subsystems and the directed edges correspond to the admissible switches between the subsystems. Clearly, a path from a source vertex to a destination vertex on this graph implies that the set of admissible switches allows us to reach the destination subsystem from the source subsystem. We show that if the underlying directed graph of a switched system admits paths between these subsystems that satisfy certain conditions involving the following components: (i) the rate of decay of the Schur stable combination, (ii) upper bounds on the Euclidean norms of the (matrix) commutators of certain products of the subsystems matrices that appear in the paths, and (iii) certain scalars capturing the properties of the subsystems matrices, then the switched system under consideration is stabilizable under restricted switching. In case of multiple favourable paths, our construction of stabilizing switching signals is non-periodic, while with unique choice of these paths, we construct stabilizing periodic switching signals. In both cases, the stabilizing switching signals activate the subsystems that appear in the paths under consideration, in different patterns, and dwell on them for admissible durations of time. We demonstrate the settings where switches between the subsystems forming a Schur stable combination are partially restricted or unrestricted as special cases of our results.

    The primary features of our results are the following:
    \begin{itemize}[label = \(\circ\), leftmargin = *]
        \item Our conditions for stabilizability involve scalar inequalities, and are, therefore, numerically easier to verify compared to the existing matrix inequalities based conditions.
        \item We rely on upper bounds on the Euclidean norm of the matrix commutators of certain products of the subsystems matrices, thereby adding an inherent robustness to our stabilizability conditions in the sense that if the elements of the subsystems matrices are only partially known and/or they evolve over time in a manner such that the matrices are not ``too far'' from a set of matrices for which the products under consideration commute, then the switched system continues to be stable under our switching signals.
    \end{itemize}
    Matrix commutators (Lie brackets) have been employed to study stability of switched systems with all or some stable subsystems earlier in the literature, see e.g., \cite{Narendra1994, Agrachev2012, stab_cycle, stab_min_dwell, stab_min_max_dwell}. However, to the best of our knowledge, this is the first instance where a blend of directed graphs and matrix commutators is employed to address stability of switched systems with all unstable subsystems under restricted switching, see Remark \ref{rem:compa3} for a comparative discussion of our results with the existing matrix commutator based stability conditions for switched systems.
%=====================
%=====================
\subsection{Paper organization}
\label{ss:paper_org}
     The remainder of this paper is organized as follows: in \S\ref{s:prob_stat} we formulate the problem under consideration. A set of preliminaries for our results are described in \S\ref{s:prelims}. Our results appear in \S\ref{s:res}. We present a numerical example in \S\ref{s:numex} and conclude in \S\ref{s:concln} with a brief discussion of future research directions.
%====================
%====================
\subsection{Notation}
\label{ss:notation}
     \(\R\) is the set of real numbers and \(\N\) is the set of natural numbers, \(\N_{0} = \N\cup\{0\}\). \(\norm{\cdot}\) denotes the Euclidean norm (resp., induced matrix norm) of a vector (resp., a matrix). For a matrix \(\mathcal{M}\), given by a product of matrices \(M_{k}\)’s, \(\abs{\mathcal{M}}\) denotes the length of the product, i.e., the number of matrices that appear in \({\mathcal{M}}\), counting repetitions.
%=================================
\section{Problem statement}
\label{s:prob_stat}
    We consider a family of systems
        \begin{align}
        \label{e:family}
            x(t+1) = A_{\ell}x(t),\:\:x(0) = x_{0},\:\:\ell\in\P,\:\:t\in\N_{0},
        \end{align}
    where \(x(t)\in\R^{d}\) is the vector of states at time \(t\), \(\P = \{1,2,\ldots,N\}\) is an index set, and \(A_{\ell}\in\R^{d\times d}\), \(\ell\in\P\) are constant unstable matrices.\footnote{A matrix \(\mathcal{M}\in\R^{d\times d}\) is Schur stable if all its eigenvalues are inside the open unit disk. \(\mathcal{M}\) is unstable if it is not Schur stable.} Let \(\sigma:\N_{0}\to\P\) be a switching signal that specifies at every time \(t\), the index of the active subsystem, i.e., the dynamics from \eqref{e:family} that is being followed at \(t\). A switched system \cite[\S 1.1.2]{Liberzon2003} generated by the family of systems \eqref{e:family} and a switching signal \(\sigma\) is given by
    \begin{align}
    \label{e:swsys}
        x(t+1) = A_{\sigma(t)}x(t),\:\:x(0)=x_{0},\:\:t\in\N_{0}.
    \end{align}

    Let \(E(\P)\) be the set of all ordered pairs \((k,\ell)\) such that a switch from subsystem \(k\) to subsystem \(\ell\) is admissible, \(k,\ell\in\P\), \(k\neq \ell\), and \(\delta\) and \(\Delta\) denote the admissible minimum and maximum dwell times on each subsystem \(\ell\in\P\), respectively, \(0 < \delta < \Delta\). Let \(0=:\tau_{0}<\tau_{1}<\cdots\) be the \emph{switching instants}; these are the points in time where \(\sigma\) jumps. We call a switching signal \(\sigma\) \emph{admissible} if it satisfies the following two conditions: \((\sigma(\tau_{h}),\sigma(\tau_{h+1}))\in E(\P)\) and \(\delta\leq\tau_{h+1}-\tau_{h}\leq\Delta\), \(h=0,1,2,\ldots\). Let \(\Sw\) denote the set of all admissible switching signals. The solution to \eqref{e:swsys} is given by
    %\begin{align*}
    %\label{e:soln}
    \[
        x(t) = A_{\sigma(t-1)}\ldots A_{\sigma(1)}A_{\sigma(0)}x_{0},\:\:t\in\N,
    \]
    %\end{align*}
    where we have suppressed the dependence of \(x\) on \(\sigma\in\Sw\) for notational simplicity.

    It is evident that the existence of a \(\sigma\in\Sw\) that ensures stability of \eqref{e:swsys} depends on the following components: (a) the unstable subsystems matrices, \(A_{\ell}\), \(\ell\in\P\), (b) the set of admissible switches, \(E(\P)\), and (c) the admissible dwell times on the subsystems, \(\delta\) and \(\Delta\). We will consider (b) and (c) to be ``given'', and our objective is to ``identify'' conditions on (a) such that \(\Sw\) admits a stabilizing switching signal. Recall that

    \begin{defn}{\cite[\S2]{Agrachev2012}}
    \label{d:gues}
    \rm{
        The switched system \eqref{e:swsys} is \emph{globally exponentially stable (GES) under a switching signal \(\sigma\in\Sw\)} if there exist positive numbers \(c\) and \(\lambda\) such that for arbitrary choices of the initial condition \(x_{0}\), the following inequality holds:
        \begin{align}
        \label{e:ges1}
            \norm{x(t)}\leq ce^{-\lambda t}\norm{x_{0}}\:\:\text{for all}\:t\in\N,
        \end{align}
        where \(\norm{v}\) denotes the Euclidean norm of a vector \(v\).
    }
    \end{defn}
    \begin{defn}
    \label{d:stabilizability}
    \rm{
        The switched system \eqref{e:swsys} is called \emph{stabilizable} if there exists a switching signal \(\sigma\in\Sw\) under which \eqref{e:swsys} is
        GES.
    }
    \end{defn}
    We will solve the following problem:
    \begin{prob}
    \label{prob:mainprob}
    \rm{
         Given the set of admissible switches between the subsystems, \(E(\P)\), and the admissible minimum and maximum dwell times on the subsystems, \(\delta\) and \(\Delta\), find conditions on the matrices \(A_{\ell}\), \(\ell\in\P\), such that the switched system \eqref{e:swsys} is stabilizable.
    }
    \end{prob}

    Notice that we are seeking for a \(\sigma\) that obeys the pre-specified restrictions on admissible switches between the subsystems and admissible dwell times on the subsystems \emph{and} ensures stability of \eqref{e:swsys}. At this point, it is worth highlighting that Problem \ref{prob:mainprob} does not admit a trivial solution even when \eqref{e:family} admits stable subsystems. Indeed, in the presence of restrictions on admissible maximum dwell times, a constant switching signal on a stable subsystem is not admissible, and switching between stable subsystems does not necessarily preserve stability of a switched system.

    Prior to presenting our solution to Problem \ref{prob:mainprob} we catalog a set of preliminaries.
%=======================================================
%=======================================================
\section{Preliminaries}
\label{s:prelims}
    We will consider that
     \begin{assump}
    \label{a:stable_combi}
    \rm{
        There exist \(i,j\in\P\) and \(p,q\in\{\delta,\delta+1,\ldots,\Delta\}\) such that the matrix \(\A := A_{i}^{p}A_{j}^{q}\) is Schur stable.
    }
    \end{assump}

    It follows from the properties of Schur stable matrices that
   \begin{fact}
   \label{fact:m_defn}
   \rm{
        There exists \(m\in\N\) such that the following condition holds:
        \begin{align}
        \label{e:m_ineq}
            \norm{\A^{m}}\leq\rho < 1.
        \end{align}
   }
   \end{fact}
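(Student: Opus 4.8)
The plan is to invoke the standard characterization of Schur stability in terms of the spectral radius together with Gelfand's formula expressing the spectral radius via operator norms of matrix powers. Write $\rho_{*} := \rho(\A)$ for the spectral radius of $\A = A_{i}^{p}A_{j}^{q}$. Since $\A$ is Schur stable by Assumption \ref{a:stable_combi}, all its eigenvalues lie inside the open unit disk, and hence $\rho_{*} < 1$.

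Next I would recall Gelfand's formula, $\rho_{*} = \lim_{k\to\infty}\norm{\A^{k}}^{1/k}$, valid for the induced matrix norm $\norm{\cdot}$ (indeed for any submultiplicative matrix norm). Fix any $\varepsilon > 0$ small enough that $\rho_{*} + \varepsilon < 1$; such an $\varepsilon$ exists because $\rho_{*} < 1$. By the definition of the limit there exists $m\in\N$ with $\norm{\A^{m}}^{1/m} \leq \rho_{*} + \varepsilon$, and therefore
\[
    \norm{\A^{m}} \leq (\rho_{*} + \varepsilon)^{m} < 1 .
\]
Setting $\rho := (\rho_{*}+\varepsilon)^{m}$ (or, equally well, $\rho := \norm{\A^{m}}$) gives $\norm{\A^{m}} \leq \rho < 1$, which is precisely \eqref{e:m_ineq}.

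Alternatively one can bypass Gelfand's formula: Schur stability of $\A$ is equivalent to $\A^{k}\to 0$ (in norm) as $k\to\infty$, as seen from the Jordan canonical form of $\A$, whose powers have entries given by polynomials in $k$ times $k$-th powers of eigenvalues of modulus strictly below $1$, all of which tend to $0$. Convergence $\norm{\A^{k}}\to 0$ then immediately produces an $m$ with $\norm{\A^{m}} < 1$, and we take $\rho := \norm{\A^{m}}$.

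There is no genuine obstacle here: the statement is classical, and the only point requiring care is to draw the correct norm-level consequence of a bound on the spectral radius — a pointwise bound $\norm{\A} < 1$ need \emph{not} hold even when $\A$ is Schur stable, which is exactly why passing to a power $m$ is necessary. I would include the short Gelfand-formula argument for self-containedness and move on.
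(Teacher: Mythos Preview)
Your proposal is correct. The paper itself does not supply a proof of this fact at all: it simply introduces the statement with the phrase ``It follows from the properties of Schur stable matrices that'' and moves on, treating \eqref{e:m_ineq} as a standard consequence of Schur stability. Your Gelfand-formula argument (and the Jordan-form alternative) is exactly the routine justification one would give to unpack that sentence, so there is nothing to compare---you have filled in a detail the paper deliberately omitted.
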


   The motivation behind Assumption \ref{a:stable_combi} is the use of purely time-dependent switching signals for the stabilization of \eqref{e:swsys}. Since we do not utilize (or assume access to) information about system state, \(x(t)\), \(t\in\N_{0}\), the existence of a stable combination formed by two subsystems matrices is useful in our analysis. Let switches between the subsystems \(i\) and \(j\) be unrestricted, i.e., \((i,j)\) and \((j,i)\in E(\P)\). Consider a switching signal \(\sigma\) that activates the subsystems \(j\) and \(i\) alternatively and dwells on them for \(q\) and \(p\) units of time, respectively. Mathematically speaking, \(\sigma\) satisfies
        \(
            \sigma(\tau_{0}) = j,\:\:\sigma(\tau_{i}) = i,\:\:\sigma(\tau_{2}) = j,\:\:\sigma(\tau_{3}) = i,\ldots
        \)
        with
        \begin{align*}
            \tau_{h+1} - \tau_{h} =
            \begin{cases}
                q,\:\:\text{if}&\:\sigma(\tau_{h}) = j,\\
                p,\:\:\text{if}&\:\sigma(\tau_{h}) = i,
            \end{cases}
            \:\:h=0,1,2,\ldots.
        \end{align*}
        Since \((j,i)\) and \((i,j)\in E(\P)\) and \(p,q\in\{\delta,\delta+1,\ldots,\Delta\}\), it follows that \(\sigma\in\Sw\). GES of \eqref{e:swsys} under \(\sigma\) is immediate from Fact \ref{fact:m_defn}. Indeed, we are dealing with stability of the difference equation
        \(
            x(\overline{t}+1) = \A x(\overline{t}),\:\:x(0) = x_{0},
        \)
        where \(\A\) is Schur stable and \(\overline{t}:t = (p+q):1\).

        However, since admissible switches between the subsystems are restricted, Assumption \ref{a:stable_combi} does not lead to a trivial solution to Problem \ref{prob:mainprob} if either \((i,j)\) or \((j,i)\) or both \((i,j)\) and \((j,i)\) are not elements of \(E(\P)\). In the sequel we aim for a general solution to Problem \ref{prob:mainprob} that works as long as it is possible to reach subsystem \(i\) from subsystem \(j\) and vice-versa, through a set of favourable subsystems. Clearly, we need to take into account the properties of the subsystems matrices that appear between \(i\) and \(j\) in this case. We shall identify the settings where the switches between the subsystems \(i\) and \(j\) are partially restricted (either \((i,j)\) or \((j,i)\in E(\P)\)) and unrestricted (both \((i,j)\) and \((j,i)\in E(\P)\)) as special cases of our results.

   Given \(m\in\N\), let \(\mb\in\N\) be such that
   \begin{align}
   \label{e:mbar_defn}
        \overline{m} =
        \begin{cases}
            1,\:\:&\text{if}\:m\in\{1,2\},\\
            2,\:\:&\text{if}\:m\in\{3,4,5\},\\
            3,\:\:&\text{if}\:m\in\{6,7,8,9\},\\
            4,\:\:&\text{if}\:m\in\{10,11,12,13,14\},\\
            \vdots
        \end{cases}
   \end{align}
   Let
    \begin{align}
    \label{e:M_defn}
        M := \max_{\ell\in\P}\norm{A_{\ell}}.
    \end{align}

    We associate a directed graph \(G(\P,E(\P))\) with the family of systems \eqref{e:family} in the following manner:
    \begin{itemize}[label = \(\circ\), leftmargin = *]
        \item the set of vertices of \(G\) is the index set, \(\P\), and
        \item the set of edges of \(G\) is the set of admissible switches, \(E(\P)\).
    \end{itemize}

    \begin{defn}
    \label{d:path}
    \rm{
        Fix \(u,v\in\P\). A \emph{\(u\to v\) path on \(G(\P,E(\P))\)} is a finite alternating sequence, \(P_{G}^{u\to v} = w_{0},(w_{0},w_{1}),w_{1},\ldots\),\\\(w_{n-1}\),\((w_{n-1},w_{n}),w_{n}\), where %\(w_{k}\in\P\), \(k=0,1,\ldots,n\), \((w_{k},w_{k+1})\in E(\P)\), \(k=0,1,\ldots,n-1\), \(w_{0} = u\), \(w_{n} = v\), \(w_{k}\neq u\), \(w_{k}\neq v\), \(k=1,2,\ldots,n-1\).
        \begin{itemize}[label = \(\circ\), leftmargin = *]
            \item \(w_{k}\in\P\), \(k=0,1,\ldots,n\),
            \item \((w_{k},w_{k+1})\in E(\P)\), \(k=0,1,\ldots,n-1\),
            \item \(w_{0} = u\),
            \item \(w_{n} = v\),
            \item \(w_{k}\neq u\), \(w_{k}\neq v\), \(k=1,2,\ldots,n-1\).
        \end{itemize}
    }
    \end{defn}

    \begin{defn}
    \label{d:length}
    \rm{
        The \emph{length of a \(u\to v\) path on \(G(\P,E(\P))\)}, denoted by \(\abs{P_{G}^{u\to v}}\), is the total number of vertices that appear in \(P_{G}^{u\to v}\) excluding the initial vertex \(u\) and the final vertex \(v\). For example, the length of \(P_{G}^{u\to v}\), described in Definition \ref{d:path}, is \(n-1\).
    }
    \end{defn}

    \begin{defn}
    \label{d:comm}
    \rm{
        For a \(u\to v\) path on \(G(\P,E(\P))\), we define the following set of (matrix) commutators of products of matrices:
        \begin{align}
        \label{e:comm}
            F_{u\to v,\ell}^{a,b} := A_{\ell}^{a}(A_{w_{n-1}}^{b}\ldots A_{w_{1}}^{b})-(A_{w_{n-1}}^{b}\ldots A_{w_{1}}^{b})A_{\ell}^{a},
        \end{align}
        where \(\ell\in\P\), \(w_{k}\neq\ell\), \(k=1,2,\ldots,n-1\), and \(a,b\in\N\).
    }
    \end{defn}
    We note that for a \(u\to v\) path, \(P_{G}^{u\to v} = u,(u,v),v\), on \(G(\P,E(\P))\), \(\abs{P_{G}^{u\to v}} = 0\) and \(F_{u\to v,\ell}^{a,b} = 0\) for any \(\ell\in\P\) and \(a,b\in\N\).

    We are now in a position to present our solutions to Problem \ref{prob:mainprob}.
%==========================================================================
%==========================================================================
\section{Results}
\label{s:res}
    Let \(i,j\in\P\) satisfy Assumption \ref{a:stable_combi}. Fix two pairs of \(j\to i\) and \(i\to j\) paths, \((P_{G}^{j\toone i},P_{G}^{i\toone j})\) and \((P_{G}^{j\totwo i},P_{G}^{i\totwo j})\) on \(G(\P,E(\P))\). We define the following quantities
    \begin{align*}
    %\label{e:xi_defn}
            \xi_{j\overset{1}{\to} i,i} &= \abs{P_{G}^{j\overset{1}{\to} i}}\delta(\mb-1)
            +\abs{P_{G}^{i\overset{1}{\to} j}}\delta\overline{m} +
            \biggl(\abs{P_{G}^{j\totwo i}}+\abs{P_{G}^{i\totwo j}}\biggr)\delta(m-\mb)+ p(m-1) + qm,\nonumber\\
            \xi_{j\overset{2}{\to} i,i} &= \biggl(\abs{P_{G}^{j\overset{1}{\to} i}}+\abs{P_{G}^{i\overset{1}{\to} j}}\biggr)\delta\mb + \abs{P_{G}^{j\overset{2}{\to} i}}\delta(m-\mb-1)+\abs{P_{G}^{i\overset{2}{\to} j}}\delta(m-\mb)+p(m-1)+qm,\nonumber\\
            \xi_{j\toone i,j} &= \abs{P_{G}^{j\toone i}}\delta(\mb-1)+\abs{P_{G}^{i\toone j}}\delta\mb+\biggl(\abs{P_{G}^{j \totwo i}}+\abs{P_{G}^{i\totwo j}}\biggr)\delta(m-\mb)+pm+q(m-1),\nonumber\\
            \xi_{j\totwo i,j} &= \biggl(\abs{P_{G}^{j\toone i}}+\abs{P_{G}^{i\toone j}}\biggr)\delta\mb + \abs{P_{G}^{j\totwo i}}\delta(m-\mb-1)+\abs{P_{G}^{i\totwo j}}\delta(m-\mb)+pm+q(m-1),\nonumber\\
            \xi_{i\toone j,i} &= \abs{P_{G}^{j\toone i}}\delta\mb + \abs{P_{G}^{i\toone j}}\delta(\mb-1)+\biggl(\abs{P_{G}^{j\totwo i}}+\abs{P_{G}^{i\totwo j}}\biggr)\delta(m-\mb)+ p(m-1) + qm,\\
            \xi_{i\totwo j,i} &= \biggl(\abs{P_{G}^{j\toone i}}+\abs{P_{G}^{i\toone j}}\biggr)\delta\mb+\abs{P_{G}^{j\totwo i}}\delta(m-\mb)+\abs{P_{G}^{i\totwo j}}\delta(m-\mb-1)+p(m-1)+qm,\nonumber\\
            \xi_{i\toone j,j} &= \abs{P_{G}^{j\toone i}}\delta\mb + \abs{P_{G}^{i\toone j}}\delta(\mb-1) + \biggl(\abs{P_{G}^{j\totwo i}}+\abs{P_{G}^{i\totwo j}}\biggr)\delta(m-\mb)+pm+q(m-1),\nonumber\\
            \xi_{i\totwo j,j} &= \biggl(\abs{P_{G}^{j\toone i}}+\abs{P_{G}^{i\toone j}}\biggr)\delta\mb + \abs{P_{G}^{j\totwo i}}\delta(m-\mb) + \abs{P_{G}^{i\totwo j}}\delta(m-\mb-1)+pm+q(m-1),\nonumber\\
            \xi_{j\toone i\toone j} &= \Biggl(\biggl(\abs{P_{G}^{j\toone i}}+\abs{P_{G}^{i\toone j}}\biggr)\delta+p+q\Biggr)\mb,\nonumber\\
             \xi_{j\totwo i\totwo j} &= \Biggl(\biggl(\abs{P_{G}^{j\totwo i}}+\abs{P_{G}^{i\totwo j}}\biggr)\delta+p+q\Biggr)(m-\mb).\nonumber
        \end{align*}
         \begin{theorem}
   \label{t:mainres1}
        Let \(i,j\in\P\) satisfy Assumption \ref{a:stable_combi} and \(\lambda\) be an arbitrary positive number satisfying
        \begin{align}
        \label{e:maincondn1}
            \rho e^{\lambda m} < 1.
        \end{align}
        Suppose that \(G(\P,E(\P))\) admits two pairs of \(j\to i\) and \(i\to j\) paths, \((P_{G}^{j\overset{r}{\to} i},P_{G}^{i\overset{r}{\to} j})\), \(r=1,2\), that satisfy the following condition: there exist scalars \(\varepsilon_{j\overset{r}{\to} i,i}\), \(\varepsilon_{j\overset{r}{\to} i,j}\), \(\varepsilon_{i\overset{r}{\to} j,i}\), \(\varepsilon_{i\overset{r}{\to} j,j}\), \(r=1,2\), small enough, such that
        \begin{align}
        \label{e:maincondn2}
            \norm{F_{j\overset{r}{\to} i,i}^{p,\delta}}\leq\varepsilon_{j\overset{r}{\to}i,i},
             \norm{F_{j\overset{r}{\to} i,j}^{q,\delta}}\leq\varepsilon_{j\overset{r}{\to}i,j},\:\:r=1,2,
        \end{align}
        \begin{align}
        \label{e:maincondn3}
            \norm{F_{i\overset{r}{\to} j,i}^{p,\delta}}\leq\varepsilon_{i\overset{r}{\to}j,i},
             \norm{F_{i\overset{r}{\to} j,j}^{q,\delta}}\leq\varepsilon_{i\overset{r}{\to}j,j},\:\:r=1,2,
        \end{align}
        and
        \begin{align}
        \label{e:maincondn4}
            \rho e^{\lambda m} + \Biggl(&\frac{\mb(\mb-1)}{2}M^{\xi_{j\overset{1}{\to} i,i}}\varepsilon_{j\overset{1}{\to} i,i}
            + \frac{\mb(\mb+1)}{2}\biggl(M^{\xi_{j\overset{1}{\to} i,j}}\varepsilon_{j\overset{1}{\to} i,j}+M^{\xi_{i\overset{1}{\to} j,i}}\varepsilon_{i\overset{1}{\to} j,i}+M^{\xi_{i\overset{1}{\to} j,j}}\varepsilon_{i\overset{1}{\to} j,j}\biggr)\nonumber\\
            &\:\:+\frac{m(m-1)-\mb(\mb-1)}{2}M^{\xi_{j\overset{2}{\to} i,i}}\varepsilon_{j\overset{2}{\to} i,i}
            +\frac{m(m+1)-\mb(\mb+1)}{2}\biggl(M^{\xi_{j\overset{2}{\to} i,j}}\varepsilon_{j\overset{2}{\to} i,j}\nonumber\\
            &\quad\quad+M^{\xi_{i\overset{2}{\to} j,i}}\varepsilon_{i\overset{2}{\to} j,i}+M^{\xi_{i\overset{2}{\to} j,j}}\varepsilon_{i\overset{2}{\to} j,j}\biggr)\Biggr)
            \times e^{\lambda\biggl(\xi_{j\overset{1}{\to}i\overset{1}{\to}j}+\xi_{j\overset{2}{\to}i\overset{2}{\to}j}\biggr)}\leq 1.
        \end{align}
        Then there exists a non-periodic switching signal \(\sigma\in\Sw\) under which the switched system \eqref{e:swsys} is GES.
   \end{theorem}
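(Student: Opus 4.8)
The plan is to exhibit an explicit admissible $\sigma$ which, over a suitable cycle, realizes the Schur stable power $\A^{m}$ ``smeared out'' along admissible round trips between subsystems $j$ and $i$, and then to control the discrepancy from $\A^{m}$ by the matrix commutators appearing in \eqref{e:maincondn2}--\eqref{e:maincondn3}. Concretely, I would define $\sigma$ as follows: starting on subsystem $j$, perform $m$ consecutive round trips between $j$ and $i$; on each round trip, dwell on $j$ for $q$ steps, traverse a $j\to i$ path on $G(\P,E(\P))$ dwelling $\delta$ steps on every intermediate vertex, dwell on $i$ for $p$ steps, and traverse an $i\to j$ path back, again with $\delta$ steps on every intermediate vertex. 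Use the path pair $(P_{G}^{j\toone i},P_{G}^{i\toone j})$ on the first $\mb$ round trips and $(P_{G}^{j\totwo i},P_{G}^{i\totwo j})$ on the remaining $m-\mb$ round trips; the order in which the two kinds of round trips are arranged within a cycle is immaterial for the estimate below, so choosing different orderings in successive cycles produces a non-periodic $\sigma$. Since every switch is along an edge of $G$ and every dwell time lies in $\{\delta,\ldots,\Delta\}$ (recall $p,q\in\{\delta,\ldots,\Delta\}$ by Assumption \ref{a:stable_combi}), we get $\sigma\in\Sw$. Write $\Phi$ for the state-transition matrix of \eqref{e:swsys} over one cycle; a direct count of the steps gives that the cycle has length $T=\xi_{j\toone i\toone j}+\xi_{j\totwo i\totwo j}$.

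The core step is to rewrite $\Phi$ using the commutators in \eqref{e:comm}. Read in time order, $\Phi$ is a product of $m$ blocks $A_{i}^{p}$, $m$ blocks $A_{j}^{q}$, and the path blocks $A_{w_{n-1}}^{\delta}\cdots A_{w_{1}}^{\delta}$ of the chosen $j\to i$ and $i\to j$ paths, suitably interleaved. I would then move every path block past the $A_{i}^{p}$'s and $A_{j}^{q}$'s separating it from one end of the product, one transposition at a time: transposing a $j\tor i$ (resp.\ $i\tor j$) path block past a neighbouring $A_{i}^{p}$ or $A_{j}^{q}$ replaces those two factors by $\pm F_{j\tor i,\cdot}^{p,\delta}$ or $\pm F_{j\tor i,\cdot}^{q,\delta}$ (resp.\ $\pm F_{i\tor j,\cdot}^{p,\delta}$ or $\pm F_{i\tor j,\cdot}^{q,\delta}$) times the remaining factors. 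Performed systematically, this rearranges $\Phi$ into a principal term, in which all path blocks have been collected and $\A^{m}=(A_{i}^{p}A_{j}^{q})^{m}$ occurs as a factor, plus a finite sum of error terms, each being a product of powers of subsystem matrices multiplied by exactly one commutator $F_{\cdot}^{\cdot,\delta}$.

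For the norm estimate I would bound the principal term by $\rho e^{\lambda m}$ using $\norm{\A^{m}}\le\rho$ (Fact \ref{fact:m_defn}), $\norm{A_{\ell}}\le M$ for the path blocks, and \eqref{e:maincondn1}. For a given error term I would bound its subsystem-matrix product by $M$ raised to the number of steps it represents, which by construction equals the cycle length $T$ minus the lengths of the two factors that were commuted, i.e.\ exactly one of the exponents $\xi_{j\tor i,\ell}$ or $\xi_{i\tor j,\ell}$; the commutator is bounded by the corresponding $\varepsilon$ from \eqref{e:maincondn2}--\eqref{e:maincondn3}. Counting how many error terms of each type occur gives the combinatorial coefficients: the path block created on round trip number $\ell$ has to be carried past $\ell-1$ blocks $A_{i}^{p}$ and $\ell$ blocks $A_{j}^{q}$, and summing over $\ell$ (within each of the two groups of round trips determined by $\mb$) yields $\frac{\mb(\mb-1)}{2}$, $\frac{\mb(\mb+1)}{2}$ and their complements $\frac{m(m-1)-\mb(\mb-1)}{2}$, $\frac{m(m+1)-\mb(\mb+1)}{2}$. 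Collecting the factor $e^{\lambda T}=e^{\lambda(\xi_{j\toone i\toone j}+\xi_{j\totwo i\totwo j})}$ that turns the per-cycle bound into a bound per unit time, one obtains that $\norm{\Phi}$ is at most the left-hand side of \eqref{e:maincondn4}. By \eqref{e:maincondn4}, and since \eqref{e:maincondn1} holds strictly while the $\varepsilon$'s can be taken small, this forces $\norm{\Phi}<1$; hence $\norm{x(kT)}\le\norm{\Phi}^{k}\norm{x_{0}}$ decays geometrically, and padding by at most $M^{T}$ over the finitely many time steps inside a cycle yields \eqref{e:ges1}, i.e.\ GES of \eqref{e:swsys} under $\sigma$. (The definition of $\mb$ in \eqref{e:mbar_defn} is exactly what balances the two groups of error terms arising from the two path pairs.)

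The main obstacle is the bookkeeping in the rewriting of $\Phi$: fixing an order of transpositions so that each commutator $F$ ends up behind a precisely controlled set of factors, tracking signs and the non-commuting path blocks throughout, matching the number of steps behind each commutator with the stated $\xi$-exponents, and counting the error terms so as to recover the exact coefficients in \eqref{e:maincondn4}. Once that combinatorial identity is in place, the remaining estimates — the triangle inequality, $\norm{\A^{m}}\le\rho$, $\norm{A_{\ell}}\le M$, and the geometric iteration over cycles — are routine.
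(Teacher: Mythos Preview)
Your structural outline is right and close to the paper's: build \(\sigma\) from \(m\) admissible round trips between \(j\) and \(i\) (\(\bar m\) of them via path pair \(1\), the remaining \(m-\bar m\) via path pair \(2\)), then rewrite the associated matrix product by commuting each intermediate ``path block'' past the neighbouring \(A_i^p\)'s and \(A_j^q\)'s so that \(\A^m\) emerges as a factor of the principal term. The error-term count you give --- \(\tfrac{\bar m(\bar m-1)}{2}\), \(\tfrac{\bar m(\bar m+1)}{2}\) and their complements --- is exactly what the paper obtains.

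The gap is in how you close the estimate. A \emph{per-cycle} bound on \(\Phi\) does not yield what you claim. After the rewriting you have \(\Phi=\A^{m}L_{1}+L_{2}\), where \(L_{1}\) is the product of all collected path blocks; submultiplicativity gives only \(\norm{\Phi}\le\rho\norm{L_{1}}+\norm{L_{2}}\), with \(\norm{L_{1}}\le M^{\abs{L_{1}}}\) and \(\abs{L_{1}}=\bigl(\abs{P_G^{j\toone i}}+\abs{P_G^{i\toone j}}\bigr)\delta\bar m+\bigl(\abs{P_G^{j\totwo i}}+\abs{P_G^{i\totwo j}}\bigr)\delta(m-\bar m)\). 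Since typically \(M>1\) while \(\lambda\) is small, \(\rho M^{\abs{L_{1}}}\) is in no way controlled by \(\rho e^{\lambda m}\), so condition \eqref{e:maincondn4} does \emph{not} imply \(\norm{\Phi}\le 1\) (nor \(\norm{\Phi}e^{\lambda T}\le 1\)). The term \(\rho e^{\lambda m}\) in \eqref{e:maincondn4} simply cannot come from a one-shot bound on \(\Phi\); it appears only because the path blocks are never estimated in isolation. The paper gets around the \(M^{\abs{L_{1}}}\) blow-up by inducting on the length of an initial segment \(\overline W=LR\): after writing \(L=\A^{m}L_{1}+L_{2}\) one applies the induction hypothesis to the \emph{shorter product} \(L_{1}R\) (and to \(R\)), obtaining \(\norm{\A^{m}L_{1}R}\le\rho\norm{L_{1}R}\le\rho\,c\,e^{-\lambda\abs{L_{1}R}}\), which is precisely what produces \(\rho e^{\lambda m}\) after factoring out \(ce^{-\lambda\abs{\overline W}}\). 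Your cycle iteration \(\norm{x(kT)}\le\norm{\Phi}^{k}\norm{x_0}\) discards exactly this step, and that is where the argument breaks.

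A secondary point: your non-periodicity device (``reorder the two kinds of round trips within successive cycles'') is not free. The coefficients in \eqref{e:maincondn4} come from one fixed arrangement (path-\(1\) round trips first, then path-\(2\)); permuting the round trips changes how many \(A_i^p\)'s and \(A_j^q\)'s each path block must cross, hence both the counts and which \(\xi\)-exponent sits behind each commutator. So ``the order is immaterial for the estimate'' is not justified as stated. The paper creates non-periodicity differently, by following one path-\(1\) round trip with an increasing number of path-\(2\) round trips, and then runs the induction over arbitrary initial segments of the resulting infinite product rather than over fixed-length cycles.
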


   \begin{proof}
        Let
        \begin{align*}
            P_{G}^{j\toone i} &= w_{0}^{(1)},(w_{0}^{(1)},w_{1}^{(1)}),w_{1}^{(1)},\ldots,w_{\ell_{1}-1}^{(1)},
            (w_{\ell_{1}-1}^{(1)},w_{\ell_{1}}^{(1)}),w_{\ell_{1}}^{(1)},\\
            P_{G}^{j\totwo i} &= w_{0}^{(2)},(w_{0}^{(2)},w_{1}^{(2)}),w_{1}^{(2)},\ldots,w_{\ell_{2}-1}^{(2)},
            (w_{\ell_{2}-1}^{(2)},w_{\ell_{2}}^{(2)}),w_{\ell_{2}}^{(2)},\\
            P_{G}^{i\toone j} &= w_{\ell_{1}}^{(1)},(w_{\ell_{1}}^{(1)},w_{\ell_{1}+1}^{(1)}),w_{\ell_{1}+1}^{(1)},\ldots,w_{n_{1}-1}^{(1)},
            (w_{n_{1}-1}^{(1)},w_{n_{1}}^{(1)}),w_{n_{1}}^{(1)},\\
            P_{G}^{i\totwo j} &= w_{\ell_{2}}^{(2)},(w_{\ell_{2}}^{(2)},w_{\ell_{2}+1}^{(2)}),w_{\ell_{2}+1}^{(2)},\ldots,w_{n_{2}-1}^{(2)},
            (w_{n_{2}-1}^{(2)},w_{n_{2}}^{(2)}),w_{n_{2}}^{(2)}
        \end{align*}
        satisfy conditions \eqref{e:maincondn2}-\eqref{e:maincondn4}.

        Consider a switching signal \(\sigma\) that activates the sequence of subsystems \(w_{0}^{(1)},w_{1}^{(1)},\ldots\),\(w_{\ell_{1}}^{(1)}\),\\\(w_{\ell_{1}+1}^{(1)},\ldots,w_{n_{1}-1}^{(1)}\) followed by \(s\)-many instances of the sequence of subsystems
        \(w_{0}^{(2)},w_{1}^{(2)}\),\(\ldots,w_{\ell_{2}}^{(2)}\),\\\(w_{\ell_{2}+1}^{(2)},\ldots,w_{n_{2}-1}^{(2)}\) repeatedly, \(s=1,2,3,\ldots\) with dwell times \(p\), \(q\) and \(\delta\) units of time on the subsystems \(i\), \(j\) and \(k\in\P\setminus\{i,j\}\), respectively. More precisely, consider the following sets of assignments:
        \begin{itemize}[label = \(\circ\), leftmargin = *]
            \item for \(h=c_{1}n_{1}+c_{2}n_{2},c_{1}n_{1}+c_{2}n_{2}+1,\ldots,(c_{1}+1)n_{1}+c_{2}n_{2}-1\)
            \begin{align}
            \label{e:assign1}
                \sigma(\tau_{h}) = w_{h}^{(1)} - (c_{1}n_{1}+c_{2}n_{2}),
            \end{align}
            and
            \item for \(h=(c_{1}+1)n_{1}+c_{2}n_{2},(c_{1}+1)n_{1}+c_{2}n_{2}+1,\ldots,\bigl((c_{1}+1)n_{1}+(c_{2}+1)n_{2}\bigr)-1\)
            \begin{align}
            \label{e:assign2}
                \sigma(\tau_{h}) = w_{h}^{(2)} - \bigl((c_{1}+1)n_{1}+c_{2}n_{2}\bigr),
            \end{align}
            with
            \begin{align*}
                \tau_{h+1} - \tau_{h} =
                \begin{cases}
                    p,\:\:&\text{if}\:\sigma(\tau_{h}) = i,\\
                    q,\:\:&\text{if}\:\sigma(\tau_{h}) = j,\\
                    \delta,\:\:&\text{if}\:\sigma(\tau_{h}) \in \P\setminus\{i,j\},
                \end{cases}
                h=0,1,2,\ldots.
            \end{align*}
        \end{itemize}
        The switching signal \(\sigma\) under consideration can be constructed as follows:
        \begin{itemize}[label = \(\circ\), leftmargin = *]
            \item set \(\tau_{0} = 0\),
            \item set \(c_{1} = 0\), \(c_{2} = 0\), apply \eqref{e:assign1}-\eqref{e:assign2},
            \item set \(c_{1} = 1\), \(c_{2} = 1\), apply \eqref{e:assign1}-\eqref{e:assign2},
            \item set \(c_{2} = 2\), apply \eqref{e:assign2},
            \item set \(c_{1} = 2\), \(c_{2} = 3\), apply \eqref{e:assign1}-\eqref{e:assign2},
            \item set \(c_{2} = 4\), apply \eqref{e:assign2},
            \item set \(c_{2} = 5\), apply \eqref{e:assign2},
            \item set \(c_{1} = 3\), \(c_{2} = 6\), apply \eqref{e:assign1}-\eqref{e:assign2},
            \item \(\vdots\)
        \end{itemize}

        We first show that \(\sigma\in\Sw\). By construction of \(\sigma\), its values at two consecutive switching instants \(\tau_{h}\) and \(\tau_{h+1}\) are two vertices of \(G(\P,E(\P))\) such that there is a directed edge from vertex (subsystem) \(\sigma(\tau_{h})\) to vertex (subsystem) \(\sigma(\tau_{h+1})\), \(h=0,1,2,\ldots\). Moreover, \(\tau_{h+1}-\tau_{h}\in\{\delta,p,q\}\subseteq\{\delta,\delta+1,\ldots,\Delta\}\). Thus, \(\sigma\in\Sw\). Also, by construction, \(\sigma\) is non-periodic.

        We next show that \eqref{e:swsys} is GES under \(\sigma\). Let \(W\) be the matrix product corresponding to \(\sigma\), defined as \(W = \cdots A_{\sigma(2)}A_{\sigma(1)}A_{\sigma(0)}\). We let \(\overline{W}\) denote an initial segment of \(W\). The length of \(\overline{W}\) is denoted by \(\abs{\overline{W}}\). Then the condition \eqref{e:ges1} for GES of \eqref{e:swsys} under \(\sigma\) can be written equivalently as \cite[\S2]{Agrachev2012}: for every initial segment \(\overline{W}\) of \(W\), we have
        \begin{align}
        \label{e:ges2}
            \norm{\overline{W}}\leq ce^{-\lambda\abs{\overline{W}}}.
        \end{align}
        We apply mathematical induction on the length of an initial segment \(\overline{W}\) to establish \eqref{e:ges2}.

        {\it A. Induction basis}: Pick \(c\) large enough so that \eqref{e:ges2} holds for \(\overline{W}\) satisfying \(\abs{\overline{W}}\geq \xi_{j\toone i\toone j}+\xi_{j\totwo i\totwo j}\).

        {\it B. Induction hypothesis}: Let \(\abs{\overline{W}}\geq \xi_{j\toone i\toone j}+\xi_{j\totwo i\totwo j}+1\) and assume that \eqref{e:ges2} is proved for all products of length less than \(\abs{\overline{W}}\).

        {\it C. Induction step}: Let \(\overline{W} = LR\), where \(\abs{L} = \xi_{j\toone i\toone j}+\xi_{j\totwo i\totwo j}\). It follows by construction of \(\sigma\) that \(L\) contains exactly \(m\)-many \(A_{i}^{p}\) and \(m\)-many \(A_{j}^{q}\). Let us rewrite \(L\) as \(L=\A^{m}L_{1}+L_{2}\), where
        \[
            \abs{L_{1}} = \biggl(\abs{P_{G}^{j\toone i}}+\abs{P_{G}^{i\toone j}}\biggr)\delta\mb + \biggl(\abs{P_{G}^{j\totwo i}}+\abs{P_{G}^{i\totwo j}}\biggr)\delta(m-\mb)
        \]
        and \(L_{2}\) contains the following:
        \begin{itemize}[label = \(\circ\), leftmargin = *]
            \item \(\frac{\mb(\mb-1)}{2}\)-many terms of length \(\xi_{j\toone i,i}+1\) with \(\xi_{j\toone i,i}\)-many \(A_{r}\), \(r\in\P\) and \(1\) \(F_{j\toone i,i}^{p,\delta}\),
            \item \(\frac{m(m-1)-\mb(\mb-1)}{2}\)-many terms of length \(\xi_{j\totwo i,i}+1\) with \(\xi_{j\totwo i,i}\)-many \(A_{r}\), \(r\in\P\) and \(1\) \(F_{j\totwo i,i}^{p,\delta}\),
            \item \(\frac{\mb(\mb+1)}{2}\)-many terms of length \(\xi_{j\toone i,j}+1\) with \(\xi_{j\toone i,j}\)-many \(A_{r}\), \(r\in\P\) and \(1\) \(F_{j\toone i,j}^{q,\delta}\),
            \item \(\frac{m(m+1)-\mb(\mb+1)}{2}\)-many terms of length \(\xi_{j\totwo i,j}+1\) with \(\xi_{j\totwo i,j}\)-many \(A_{r}\), \(r\in\P\) and \(1\) \(F_{j\totwo i,j}^{q,\delta}\),
            \item \(\frac{\mb(\mb+1)}{2}\)-many terms of length \(\xi_{i\toone j,i}+1\) with \(\xi_{i\toone j,i}\)-many \(A_{r}\), \(r\in\P\) and \(1\) \(F_{i\toone j,i}^{p,\delta}\),
            \item \(\frac{m(m+1)-\mb(\mb+1)}{2}\)-many terms of length \(\xi_{i\totwo j,i}+1\) with \(\xi_{i\totwo j,i}\)-many \(A_{r}\), \(r\in\P\) and \(1\) \(F_{i\totwo j,i}^{p,\delta}\),
            \item \(\frac{\mb(\mb+1)}{2}\)-many terms of length \(\xi_{i\toone j,j}+1\) with \(\xi_{i\toone j,j}\)-many \(A_{r}\), \(r\in\P\) and \(1\) \(F_{i\toone j,j}^{q,\delta}\),
            \item \(\frac{m(m+1)-\mb(\mb+1)}{2}\)-many terms of length \(\xi_{i\totwo j,j}+1\) with \(\xi_{i\totwo j,j}\)-many \(A_{r}\), \(r\in\P\) and \(1\) \(F_{i\totwo j,j}^{q,\delta}\).
        \end{itemize}

        (Consider, for example,
        \begin{align*}
            P_{G}^{j\toone i} &= j,(j,3),3,(3,i),i,\:\:
            P_{G}^{j\totwo i} = j,(j,4),4,(4,5),5,(5,i),i,\\
            P_{G}^{i\toone j} &= i,(i,1),1,(1,2),2,(2,j),j,\:\:
            P_{G}^{i\totwo j} = i,(i,6),6,(6,j),j.
        \end{align*}
        Suppose that \(m=2\), \(\delta=p=q = 2\). Let \(L = A_{6}^{2}A_{i}^{2}A_{5}^{2}A_{4}^{2}A_{j}^{2}A_{2}^{2}A_{1}^{2}A_{i}^{2}A_{3}^{2}A_{j}^{2}\). It can be rewritten as
        \begin{align*}
            L &= \underline{A_{6}^{2}A_{i}^{2}}A_{5}^{2}A_{4}^{2}A_{j}^{2}A_{2}^{2}A_{1}^{2}A_{i}^{2}A_{3}^{2}A_{j}^{2}\\ &=A_{i}^{2}A_{6}^{2}\underline{A_{5}^{2}A_{4}^{2}A_{j}^{2}}A_{2}^{2}A_{1}^{2}A_{i}^{2}A_{3}^{2}A_{j}^{2} - F_{i\totwo j,i}^{2,2}A_{5}^{2}A_{4}^{2}A_{j}^{2}A_{2}^{2}A_{1}^{2}A_{i}^{2}A_{3}^{2}A_{j}^{2}\\
            &=A_{i}^{2}\underline{A_{6}^{2}A_{j}^{2}}A_{5}^{2}A_{4}^{2}A_{2}^{2}A_{1}^{2}A_{i}^{2}A_{3}^{2}A_{j}^{2}
            -A_{i}^{2}A_{6}^{2}F_{j\totwo i,j}^{2,2}A_{2}^{2}A_{1}^{2}A_{i}^{2}A_{3}^{2}A_{j}^{2}
            -F_{i\totwo j,i}A_{5}^{2}A_{4}^{2}A_{j}^{2}A_{2}^{2}A_{1}^{2}A_{i}^{2}A_{3}^{2}A_{j}^{2}\\
            &=A_{i}^{2}A_{j}^{2}A_{6}^{2}A_{5}^{2}A_{4}^{2}\underline{A_{2}^{2}A_{1}^{2}A_{i}^{2}}A_{3}^{2}A_{j}^{2}
            -A_{i}^{2}F_{i\totwo j,j}^{2,2}A_{5}^{2}A_{4}^{2}A_{2}^{2}A_{1}^{2}A_{i}^{2}A_{3}^{2}A_{j}^{2}
            -A_{i}^{2}A_{6}^{2}F_{j\totwo i,j}^{2,2}A_{2}^{2}A_{1}^{2}A_{i}^{2}A_{3}^{2}A_{j}^{2}\\
            &\:\:-F_{i\totwo j,i}A_{5}^{2}A_{4}^{2}A_{j}^{2}A_{2}^{2}A_{1}^{2}A_{i}^{2}A_{3}^{2}A_{j}^{2}\\
            &=A_{i}^{2}A_{j}^{2}A_{6}^{2}\underline{A_{5}^{2}A_{4}^{2}A_{i}^{2}}A_{2}^{2}A_{1}^{2}A_{3}^{2}A_{j}^{2}
            -A_{i}^{2}A_{j}^{2}A_{6}^{2}A_{5}^{2}A_{4}^{2}F_{i\toone j,i}^{2,2}A_{3}^{2}A_{j}^{2}
            -A_{i}^{2}F_{i\totwo j,j}^{2,2}A_{5}^{2}A_{4}^{2}A_{2}^{2}A_{1}^{2}A_{i}^{2}A_{3}^{2}A_{j}^{2}\\
            &\:\:-A_{i}^{2}A_{6}^{2}F_{j\totwo i,j}^{2,2}A_{2}^{2}A_{1}^{2}A_{i}^{2}A_{3}^{2}A_{j}^{2}
            -F_{i\totwo j,i}A_{5}^{2}A_{4}^{2}A_{j}^{2}A_{2}^{2}A_{1}^{2}A_{i}^{2}A_{3}^{2}A_{j}^{2}\\
            &=A_{i}^{2}A_{j}^{2}\underline{A_{6}^{2}A_{i}^{2}}A_{5}^{2}A_{4}^{2}A_{2}^{2}A_{1}^{2}A_{3}^{2}A_{j}^{2}
            -A_{i}^{2}A_{j}^{2}A_{6}^{2}F_{j\totwo i,i}^{2,2}A_{2}^{2}A_{1}^{2}A_{3}^{2}A_{j}^{2}
            -A_{i}^{2}A_{j}^{2}A_{6}^{2}A_{5}^{2}A_{4}^{2}F_{i\toone j,i}^{2,2}A_{3}^{2}A_{j}^{2}\\
            &\:\:-A_{i}^{2}F_{i\totwo j,j}^{2,2}A_{5}^{2}A_{4}^{2}A_{2}^{2}A_{1}^{2}A_{i}^{2}A_{3}^{2}A_{j}^{2}
            -A_{i}^{2}A_{6}^{2}F_{j\totwo i,j}^{2,2}A_{2}^{2}A_{1}^{2}A_{i}^{2}A_{3}^{2}A_{j}^{2}
            -F_{i\totwo j,i}A_{5}^{2}A_{4}^{2}A_{j}^{2}A_{2}^{2}A_{1}^{2}A_{i}^{2}A_{3}^{2}A_{j}^{2}\\
            &=A_{i}^{2}A_{j}^{2}A_{i}^{2}A_{6}^{2}A_{5}^{2}A_{4}^{2}A_{2}^{2}A_{1}^{2}\underline{A_{3}^{2}A_{j}^{2}}
            -A_{i}^{2}A_{j}^{2}F_{i\totwo j,i}^{2,2}A_{5}^{2}A_{4}^{2}A_{2}^{2}A_{1}^{2}A_{3}^{2}A_{j}^{2}
            -A_{i}^{2}A_{j}^{2}A_{6}^{2}F_{j\totwo i,i}^{2,2}A_{2}^{2}A_{1}^{2}A_{3}^{2}A_{j}^{2}\\
            &\:\:-A_{i}^{2}A_{j}^{2}A_{6}^{2}A_{5}^{2}A_{4}^{2}F_{i\toone j,i}^{2,2}A_{3}^{2}A_{j}^{2}
            -A_{i}^{2}F_{i\totwo j,j}^{2,2}A_{5}^{2}A_{4}^{2}A_{2}^{2}A_{1}^{2}A_{i}^{2}A_{3}^{2}A_{j}^{2}
            -A_{i}^{2}A_{6}^{2}F_{j\totwo i,j}^{2,2}A_{2}^{2}A_{1}^{2}A_{i}^{2}A_{3}^{2}A_{j}^{2}\\
            &\:\:-F_{i\totwo j,i}A_{5}^{2}A_{4}^{2}A_{j}^{2}A_{2}^{2}A_{1}^{2}A_{i}^{2}A_{3}^{2}A_{j}^{2}\\
            &=A_{i}^{2}A_{j}^{2}A_{i}^{2}A_{6}^{2}A_{5}^{2}A_{4}^{2}\underline{A_{2}^{2}A_{1}^{2}A_{j}^{2}}A_{3}^{2}
            -A_{i}^{2}A_{j}^{2}A_{i}^{2}A_{6}^{2}A_{5}^{2}A_{4}^{2}A_{2}^{2}A_{1}^{2}F_{j\toone i,j}^{2,2}
            -A_{i}^{2}A_{j}^{2}F_{i\totwo j,i}^{2,2}A_{5}^{2}A_{4}^{2}A_{2}^{2}A_{1}^{2}A_{3}^{2}A_{j}^{2}\\
            &\:\:-A_{i}^{2}A_{j}^{2}A_{6}^{2}F_{j\totwo i,i}^{2,2}A_{2}^{2}A_{1}^{2}A_{3}^{2}A_{j}^{2}
            -A_{i}^{2}A_{j}^{2}A_{6}^{2}A_{5}^{2}A_{4}^{2}F_{i\toone j,i}^{2,2}A_{3}^{2}A_{j}^{2}
            -A_{i}^{2}F_{i\totwo j,j}^{2,2}A_{5}^{2}A_{4}^{2}A_{2}^{2}A_{1}^{2}A_{i}^{2}A_{3}^{2}A_{j}^{2}\\
            &\:\:-A_{i}^{2}A_{6}^{2}F_{j\totwo i,j}^{2,2}A_{2}^{2}A_{1}^{2}A_{i}^{2}A_{3}^{2}A_{j}^{2}
            -F_{i\totwo j,i}A_{5}^{2}A_{4}^{2}A_{j}^{2}A_{2}^{2}A_{1}^{2}A_{i}^{2}A_{3}^{2}A_{j}^{2}\\
            &=A_{i}^{2}A_{j}^{2}A_{i}^{2}A_{6}^{2}\underline{A_{5}^{2}A_{4}^{2}A_{j}^{2}}A_{2}^{2}A_{1}^{2}A_{3}^{2}
            -A_{i}^{2}A_{j}^{2}A_{i}^{2}A_{6}^{2}A_{5}^{2}A_{4}^{2}F_{i\toone j,j}^{2,2}A_{3}^{2}
            -A_{i}^{2}A_{j}^{2}F_{i\totwo j,i}^{2,2}A_{5}^{2}A_{4}^{2}A_{2}^{2}A_{1}^{2}A_{3}^{2}A_{j}^{2}\\
            &\:\:-A_{i}^{2}A_{j}^{2}A_{i}^{2}A_{6}^{2}A_{5}^{2}A_{4}^{2}A_{2}^{2}A_{1}^{2}F_{j\toone i,j}^{2,2}
            -A_{i}^{2}A_{j}^{2}A_{6}^{2}F_{j\totwo i,i}^{2,2}A_{2}^{2}A_{1}^{2}A_{3}^{2}A_{j}^{2}
            -A_{i}^{2}A_{j}^{2}A_{6}^{2}A_{5}^{2}A_{4}^{2}F_{i\toone j,i}^{2,2}A_{3}^{2}A_{j}^{2}\\
            &\:\:-A_{i}^{2}F_{i\totwo j,j}^{2,2}A_{5}^{2}A_{4}^{2}A_{2}^{2}A_{1}^{2}A_{i}^{2}A_{3}^{2}A_{j}^{2}
            -A_{i}^{2}A_{6}^{2}F_{j\totwo i,j}^{2,2}A_{2}^{2}A_{1}^{2}A_{i}^{2}A_{3}^{2}A_{j}^{2}
            -F_{i\totwo j,i}A_{5}^{2}A_{4}^{2}A_{j}^{2}A_{2}^{2}A_{1}^{2}A_{i}^{2}A_{3}^{2}A_{j}^{2}\\
            &=A_{i}^{2}A_{j}^{2}A_{i}^{2}\underline{A_{6}^{2}A_{j}^{2}}A_{5}^{2}A_{4}^{2}A_{2}^{2}A_{1}^{2}A_{3}^{2}
            -A_{i}^{2}A_{j}^{2}A_{i}^{2}A_{6}^{2}F_{j\totwo i,j}^{2,2}A_{2}^{2}A_{1}^{2}A_{3}^{2}
            -A_{i}^{2}A_{j}^{2}F_{i\totwo j,i}^{2,2}A_{5}^{2}A_{4}^{2}A_{2}^{2}A_{1}^{2}A_{3}^{2}A_{j}^{2}\\
            &\:\:-A_{i}^{2}A_{j}^{2}A_{i}^{2}A_{6}^{2}A_{5}^{2}A_{4}^{2}F_{i\toone j,j}^{2,2}A_{3}^{2}
            -A_{i}^{2}A_{j}^{2}A_{i}^{2}A_{6}^{2}A_{5}^{2}A_{4}^{2}A_{2}^{2}A_{1}^{2}F_{j\toone i,j}^{2,2}
            -A_{i}^{2}A_{j}^{2}A_{6}^{2}F_{j\totwo i,i}^{2,2}A_{2}^{2}A_{1}^{2}A_{3}^{2}A_{j}^{2}\\
            &\:\:-A_{i}^{2}A_{j}^{2}A_{6}^{2}A_{5}^{2}A_{4}^{2}F_{i\toone j,i}^{2,2}A_{3}^{2}A_{j}^{2}
            -A_{i}^{2}F_{i\totwo j,j}^{2,2}A_{5}^{2}A_{4}^{2}A_{2}^{2}A_{1}^{2}A_{i}^{2}A_{3}^{2}A_{j}^{2}
            -A_{i}^{2}A_{6}^{2}F_{j\totwo i,j}^{2,2}A_{2}^{2}A_{1}^{2}A_{i}^{2}A_{3}^{2}A_{j}^{2}\\
            &\:\:-F_{i\totwo j,i}A_{5}^{2}A_{4}^{2}A_{j}^{2}A_{2}^{2}A_{1}^{2}A_{i}^{2}A_{3}^{2}A_{j}^{2}\\
            &=\underbrace{A_{i}^{2}A_{j}^{2}A_{i}^{2}A_{j}^{2}}_{\A^{m}}A_{6}^{2}A_{5}^{2}A_{4}^{2}A_{2}^{2}A_{1}^{2}A_{3}^{2}
            -A_{i}^{2}A_{j}^{2}A_{i}^{2}F_{i\totwo j,j}^{2,2}A_{5}^{2}A_{4}^{2}A_{2}^{2}A_{1}^{2}A_{3}^{2}
            -A_{i}^{2}A_{j}^{2}A_{i}^{2}A_{6}^{2}F_{j\totwo i,j}^{2,2}A_{2}^{2}A_{1}^{2}A_{3}^{2}\\
            &\:\:-A_{i}^{2}A_{j}^{2}F_{i\totwo j,i}^{2,2}A_{5}^{2}A_{4}^{2}A_{2}^{2}A_{1}^{2}A_{3}^{2}A_{j}^{2}
            -A_{i}^{2}A_{j}^{2}A_{i}^{2}A_{6}^{2}A_{5}^{2}A_{4}^{2}F_{i\toone j,j}^{2,2}A_{3}^{2}
            -A_{i}^{2}A_{j}^{2}A_{i}^{2}A_{6}^{2}A_{5}^{2}A_{4}^{2}A_{2}^{2}A_{1}^{2}F_{j\toone i,j}^{2,2}\\
            &\:\:-A_{i}^{2}A_{j}^{2}A_{6}^{2}F_{j\totwo i,i}^{2,2}A_{2}^{2}A_{1}^{2}A_{3}^{2}A_{j}^{2}
            -A_{i}^{2}A_{j}^{2}A_{6}^{2}A_{5}^{2}A_{4}^{2}F_{i\toone j,i}^{2,2}A_{3}^{2}A_{j}^{2}
            -A_{i}^{2}F_{i\totwo j,j}^{2,2}A_{5}^{2}A_{4}^{2}A_{2}^{2}A_{1}^{2}A_{i}^{2}A_{3}^{2}A_{j}^{2}\\
            &\:\:-A_{i}^{2}A_{6}^{2}F_{j\totwo i,j}^{2,2}A_{2}^{2}A_{1}^{2}A_{i}^{2}A_{3}^{2}A_{j}^{2}
            -F_{i\totwo j,i}A_{5}^{2}A_{4}^{2}A_{j}^{2}A_{2}^{2}A_{1}^{2}A_{i}^{2}A_{3}^{2}A_{j}^{2}.)
        \end{align*}

        Now, from the sub-multiplicativity and sub-additivity properties of the induced norm, we have
        \begin{align}
        \label{e:pf1_step1}
            \norm{\overline{W}} &= \norm{(\A^{m}L_{1}+L_{2})R}
            \leq \norm{\A^{m}}\norm{L_{1}R} + \norm{L_{2}}\norm{R}\nonumber\\
            &\leq \rho c e^{-\lambda(\abs{\overline{W}}-m)} + \Biggl(\frac{\mb(\mb-1)}{2}M^{\xi_{j\overset{1}{\to} i,i}}\varepsilon_{j\overset{1}{\to} i,i}
            +\frac{m(m-1)-\mb(\mb-1)}{2}M^{\xi_{j\overset{2}{\to} i,i}}\varepsilon_{j\overset{2}{\to} i,i}\nonumber\\
            &\quad\quad+\frac{\mb(\mb+1)}{2}M^{\xi_{j\overset{1}{\to} i,j}}\varepsilon_{j\overset{1}{\to} i,j}
            +\frac{m(m+1)-\mb(\mb+1)}{2}M^{\xi_{j\overset{2}{\to} i,j}}\varepsilon_{j\overset{2}{\to} i,j}\nonumber\\
            &\quad\quad+\frac{\mb(\mb+1)}{2}M^{\xi_{i\overset{1}{\to} j,i}}\varepsilon_{i\overset{1}{\to} j,i}
            +\frac{m(m+1)-\mb(\mb+1)}{2}M^{\xi_{i\overset{2}{\to} j,i}}\varepsilon_{i\overset{2}{\to} j,i}\nonumber\\
            &\quad\quad+\frac{\mb(\mb+1)}{2}M^{\xi_{i\overset{1}{\to} j,j}}\varepsilon_{i\overset{1}{\to} j,j}
            +\frac{m(m+1)-\mb(\mb+1)}{2}M^{\xi_{i\overset{2}{\to} j,j}}\varepsilon_{i\overset{2}{\to} j,j}\Biggr)\nonumber\\
            &\qquad\qquad\times ce^{-\lambda\biggl(\abs{\overline{W}}-\biggl(\xi_{j\overset{1}{\to}i\overset{1}{\to}j}
            +\xi_{j\overset{2}{\to}i\overset{2}{\to}j}\biggr)\biggr)}\nonumber\\
            &=ce^{-\lambda\abs{\overline{W}}}\Biggl(\rho e^{\lambda m} + \Biggl(\frac{\mb(\mb-1)}{2}M^{\xi_{j\overset{1}{\to} i,i}}\varepsilon_{j\overset{1}{\to} i,i}
            +\frac{m(m-1)-\mb(\mb-1)}{2}M^{\xi_{j\overset{2}{\to} i,i}}\varepsilon_{j\overset{2}{\to} i,i}\nonumber\\
            &\quad\quad+\frac{\mb(\mb+1)}{2}M^{\xi_{j\overset{1}{\to} i,j}}\varepsilon_{j\overset{1}{\to} i,j}
            +\frac{m(m+1)-\mb(\mb+1)}{2}M^{\xi_{j\overset{2}{\to} i,j}}\varepsilon_{j\overset{2}{\to} i,j}\nonumber\\
            &\quad\quad+\frac{\mb(\mb+1)}{2}M^{\xi_{i\overset{1}{\to} j,i}}\varepsilon_{i\overset{1}{\to} j,i}
            +\frac{m(m+1)-\mb(\mb+1)}{2}M^{\xi_{i\overset{2}{\to} j,i}}\varepsilon_{i\overset{2}{\to} j,i}\nonumber\\
            &\quad\quad+\frac{\mb(\mb+1)}{2}M^{\xi_{i\overset{1}{\to} j,j}}\varepsilon_{i\overset{1}{\to} j,j}
            +\frac{m(m+1)-\mb(\mb+1)}{2}M^{\xi_{i\overset{2}{\to} j,j}}\varepsilon_{i\overset{2}{\to} j,j}\Biggr)\nonumber\\
            &\qquad\qquad\times e^{\lambda\biggl(\xi_{j\overset{1}{\to}i\overset{1}{\to}j}+\xi_{j\overset{2}{\to}i\overset{2}{\to}j}\biggr)}\Biggr).
        \end{align}
        In the above inequality the upper bounds on \(\norm{L_{1}R}\) and \(\norm{R}\) are obtained from the relations \(\abs{\overline{W}} = \abs{\A^{m}}+\abs{L_{1}R}\) and \(\abs{\overline{W}} = \abs{L}+\abs{R}\), respectively. Applying \eqref{e:maincondn4} to \eqref{e:pf1_step1} leads to \eqref{e:ges2}. Consequently, \eqref{e:swsys} is GES under \(\sigma\).

        This completes our proof of Theorem \ref{t:mainres1}.
   \end{proof}

        Given the admissible switches between the subsystems, \(E(\P)\), and the admissible minimum and maximum dwell times on the subsystems, \(\delta\) and \(\Delta\), Theorem \ref{t:mainres1} provides sufficient conditions on the subsystems matrices, \(A_{\ell}\), \(\ell\in\P\), such that there exists a switching signal \(\sigma\in\Sw\) under which the switched system \eqref{e:swsys} is GES. Since \(\rho < 1\), it is always possible to find a \(\lambda > 0\) (could be small) such that condition \eqref{e:maincondn1} holds. If, in addition, the underlying directed graph, \(G(\P,E(\P))\), of the switched system \eqref{e:swsys}, admits two pairs of paths, \((P_{G}^{j\tor i},P_{G}^{i\tor j})\), \(r=1,2\), \(i,j\in\P\) satisfying Assumption \ref{a:stable_combi}, \(P_{G}^{j\overset{r}{\to} i} = w_{0}^{(r)},(w_{0}^{(r)},w_{1}^{(r)}),w_{1}^{(r)}\),\\\(\ldots,w_{\ell_{r}-1}^{(r)},(w_{\ell_{r}-1}^{(r)},w_{\ell_{r}})\),
   \(w_{\ell_{r}}\) and \(P_{G}^{i\overset{r}{\to} j} = w_{\ell_{r}}^{(r)},(w_{\ell_{r}}^{(r)},w_{\ell_{r}+1}^{(r)}),w_{\ell_{r}+1}^{(r)},\ldots,w_{n_{r}-1}^{(r)},(w_{n_{r}-1}^{(r)},
   w_{n_{r}})\),\\\( w_{n_{r}}^{(r)}\), \(r=1,2\), for which the Euclidean norms of the commutators \(F_{j\overset{r}{\to}i,i}^{p,\delta}\), \(F_{j\overset{r}{\to}i,j}^{q,\delta}\), \(F_{i\overset{r}{\to}j,i}^{p,\delta}\) and \(F_{i\overset{r}{\to}j,j}^{q,\delta}\) are bounded above by small enough scalars \(\varepsilon_{j\overset{r}{\to}i,i}\), \(\varepsilon_{j\overset{r}{\to}i,j}\), \(\varepsilon_{i\overset{r}{\to}j,i}\) and \(\varepsilon_{i\overset{r}{\to}j,j}\), respectively, \(r=1,2\), such that condition \eqref{e:maincondn4} holds, then there exists a \(\sigma\in\Sw\) under which \eqref{e:swsys} is GES. This switching signal activates the sequence of subsystems, \(w_{0}^{(1)},w_{1}^{(1)},\ldots,w_{\ell_{1}}^{(1)},w_{\ell_{1}+1}^{(1)},\ldots,w_{n_{1}-1}^{(1)}\), followed by \(s\)-many instances of the sequence of subsystems, \(w_{0}^{(2)},w_{1}^{(2)},\ldots,w_{\ell_{2}}^{(2)},w_{\ell_{2}+1}^{(2)},\ldots,w_{n_{2}-1}^{(2)}\), repeatedly, \(s=1,2,3,\ldots\), and dwells on the subsystems \(i\), \(j\) and \(k\in\P\setminus\{i,j\}\) for \(p\), \(q\) and \(\delta\) units of time, respectively. Clearly, \(\sigma\) is a non-periodic element of \(\Sw\). Notice that Theorem \ref{t:mainres1} does not require \((i,j)\) and/or \((j,i)\in E(\P)\) for the utilization of stability of the matrix product, \(A_{i}^{p}A_{j}^{q}\), and works as long as it is possible to reach from subsystem \(j\) to subsystem \(i\) and vice-versa, through multiple paths that satisfy conditions \eqref{e:maincondn2}-\eqref{e:maincondn4}.

   \begin{example}
   \label{ex:numex1}
   \rm{
         Let
         \begin{align*}
            P_{G}^{j\toone i} &= j,(j,3),3,(3,i),i,\\
            P_{G}^{j\totwo i} &= j,(j,4),4,(4,5),5,(5,i),i,\\
            P_{G}^{i\toone j} &= i,(i,1),1,(1,2),2,(2,j),j,\\
            \intertext{and}
            P_{G}^{i\totwo j} &= i,(i,6),6,(6,j),j
         \end{align*}
         satisfy conditions \eqref{e:maincondn2}-\eqref{e:maincondn4}. Then a switching signal that activates the sequences of subsystems
        \begin{align*}
            &{j,3,i,1,2,j,4,5,i,6},\\
            &{j,3,i,1,2,j,4,5,i,6,j,4,5,i,6},\\
            &{j,3,i,1,2,j,4,5,i,6,j,4,5,i,6,j,4,5,i,6},\\
            &\vdots,
        \end{align*}
        and dwells on the subsystems \(i\), \(j\) and \(k\in\{1,2,3,4,5,6\}\) for \(p\), \(q\) and \(\delta\) units of time, respectively, ensures GES of the switched system \eqref{e:swsys}.
         %Here, \(S_{1} = w_{0}^{(1)}\),\(w_{1}^{(1)},\ldots,w_{\ell_{1}}^{(1)},w_{\ell_{1}+1}^{(1)},\ldots,w_{n_{1}-1}^{(1)}\) and \(S_{2} = w_{0}^{(2)},w_{1}^{(2)},\ldots,w_{\ell_{2}}^{(2)}\),\(w_{\ell_{2}+1}^{(2)}\),\(\ldots\),\(w_{n_{2}-1}^{(2)}\).
        }
   \end{example}

    \begin{rem}
    \label{rem:robustness}
    \rm{
        Condition \eqref{e:maincondn4} involves the rate of decay of the Schur stable matrix, \(A_{i}^{p}A_{j}^{q}\), upper bounds on the Euclidean norms of the commutators of certain products of the matrices, \(A_{k}^{\delta}\), \(k\in\P\setminus\{i,j\}\), that appear in the paths \(P_{G}^{j\overset{r}{\to} i}\) (resp., \(P_{G}^{i\overset{r}{\to} j}\)), \(r=1,2\) and the matrix product \(A_{j}^{q}\) (resp., \(A_{i}^{p}\)), and a set of scalars capturing the properties of the matrices, \(A_{\ell}\), \(\ell\in\P\). In the simplest case when the matrix products, \(A_{w_{\ell_{r}-1}^{(r)}}^{\delta}\ldots A_{w_{1}^{(r)}}^{\delta}\) and \(A_{j}^{q}\), and the matrix products \(A_{w_{n_{r}-1}^{(r)}}^{\delta}\ldots A_{w_{\ell_{r}+1}^{(r)}}^{\delta}\) and \(A_{i}^{p}\) commute, \(r=1,2\), the condition \eqref{e:maincondn4} reduces to \eqref{e:maincondn1}. Theorem \ref{t:mainres1}, therefore, accommodates subsystems matrices, \(A_{\ell}\), \(\ell\in\P\), for which the above matrix products do not necessarily commute, but are close to matrices, \(\overline{A}_{\ell},\:\:\ell\in\P\), for which they commute. This feature associates an inherent robustness with our stabilizability conditions. Indeed, if we are relying on approximate models of \(A_{\ell}\), \(\ell\in\P\), or the elements of \(A_{\ell}\), \(\ell\in\P\), are prone to evolve over time, then GES of \eqref{e:swsys} is preserved under our set of switching signals as long as Assumption \ref{a:stable_combi} and conditions \eqref{e:maincondn2}-\eqref{e:maincondn4} continue to hold.
        }
    \end{rem}

    A next natural topic of discussion is regarding the setting where \(G(\P,E(\P))\) does not admit two pairs of paths between the subsystems that form a Schur stable combination and/or it admits pairs of paths that are not favourable in the sense of Theorem \ref{t:mainres1}. We show, as a special case of Theorem \ref{t:mainres1}, that it is possible to stabilize \eqref{e:swsys} even if \(G(\P,E(\P))\) admits exactly one pair containing a \(j\to i\) path and an \(i\to j\) path, that satisfies a set of scalar inequalities similar to \eqref{e:maincondn2}-\eqref{e:maincondn4}. In particular, we will ensure GES of \eqref{e:swsys} under a periodic element of \(\Sw\) in this setting.

    For \(P_{G}^{j\toone i} = P_{G}^{j\totwo i} = P_{G}^{j\to i}\) and \(P_{G}^{i\toone j} = P_{G}^{i\totwo j} = P_{G}^{i\to j}\), we define
   \begin{align*}
   %\label{e:zeta_defn}
        \zeta_{j\to i,i} &= \abs{P_{G}^{j\to i}}\delta(m-1)+\abs{P_{G}^{i\to j}}\delta m+p(m-1)+qm,\nonumber\\
        \zeta_{j\to i,j} &= \abs{P_{G}^{j\to i}}\delta(m-1) + \abs{P_{G}^{i\to j}}\delta m + pm + q(m-1),\nonumber\\
        \zeta_{i\to j,i} &= \abs{P_{G}^{j\to i}}\delta m + \abs{P_{G}^{i\to j}}\delta(m-1) + p(m-1) + qm,\\
        \zeta_{i\to j,j} &= \abs{P_{G}^{j\to i}}\delta m + \abs{P_{G}^{i\to j}}\delta(m-1) + pm + q(m-1),\nonumber\\
        \zeta_{j\to i\to j} &= \Biggl(\biggl(\abs{P_{G}^{j\to i}}+\abs{P_{G}^{i\to j}}\biggr)\delta+p+q\Biggr)m,\nonumber
   \end{align*}

   \begin{corollary}
   \label{cor:mainres2}
        Let \(i,j\in\P\) satisfy Assumption \ref{a:stable_combi} and \(\lambda\) be an arbitrary positive number satisfying \eqref{e:maincondn1}. Suppose that \(G(\P,E(\P))\) admits a pair of \(j\to i\) and \(i\to j\) paths, \((P_{G}^{j\to i},P_{G}^{i\to j})\), that satisfy the following condition: there exist scalars \(\varepsilon_{j\to i,i}\), \(\varepsilon_{j\to i,j}\), \(\varepsilon_{i\to j,i}\) and \(\varepsilon_{i\to j,j}\), small enough, such that
        \begin{align}
        \label{e:maincondn5}
            \norm{F_{j\to i,i}^{p,\delta}}\leq\varepsilon_{j\to i,i}\:\:\text{and}\:\:\norm{F_{j\to i,j}^{q,\delta}}\leq \varepsilon_{j\to i,j},
        \end{align}
        \begin{align}
        \label{e:maincondn6}
            \norm{F_{i\to j,i}^{p,\delta}}\leq \varepsilon_{i\to j,i}\:\:\text{and}\:\:\norm{F_{i\to j,j}^{q,\delta}}\leq \varepsilon_{i\to j,j},
        \end{align}
        and
        \begin{align}
        \label{e:maincondn7}
            \rho e^{\lambda m} + \Biggl(&\frac{m(m-1)}{2}M^{\zeta_{j\to i,i}}\varepsilon_{j\to i,i}+\frac{m(m+1)}{2}\biggl(M^{\zeta_{j\to i,j}}\varepsilon_{j\to i,j}
            +M^{\zeta_{i\to j,i}}\varepsilon_{i\to j,i}\nonumber\\
            &\qquad\qquad+M^{\zeta_{i\to j,j}}\varepsilon_{i\to j,j}\biggr)\Biggr)\times e^{\lambda\zeta_{j\to i\to j}}\leq 1.
        \end{align}
        Then there exists a periodic switching signal \(\sigma\in\Sw\) under which the switched system \eqref{e:swsys} is GES.
   \end{corollary}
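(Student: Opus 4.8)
The plan is to obtain Corollary~\ref{cor:mainres2} from Theorem~\ref{t:mainres1} by collapsing the two pairs of paths into a single pair, and then to replace the (non-periodic) switching signal of Theorem~\ref{t:mainres1} by the obvious periodic one. Concretely, I would apply Theorem~\ref{t:mainres1} with $P_{G}^{j\toone i}=P_{G}^{j\totwo i}=P_{G}^{j\to i}$, $P_{G}^{i\toone j}=P_{G}^{i\totwo j}=P_{G}^{i\to j}$, and with the small scalars identified accordingly, $\varepsilon_{j\tor i,i}=\varepsilon_{j\to i,i}$, $\varepsilon_{j\tor i,j}=\varepsilon_{j\to i,j}$, $\varepsilon_{i\tor j,i}=\varepsilon_{i\to j,i}$, $\varepsilon_{i\tor j,j}=\varepsilon_{i\to j,j}$ for $r=1,2$. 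Under this identification conditions \eqref{e:maincondn2}--\eqref{e:maincondn3} become precisely \eqref{e:maincondn5}--\eqref{e:maincondn6}, and condition \eqref{e:maincondn1} is common to both statements; so the only thing that must be checked is that \eqref{e:maincondn4} degenerates term-by-term into \eqref{e:maincondn7}.

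That check is a routine arithmetic simplification using \eqref{e:mbar_defn}. With the paths identified, repeated use of $\delta(\mb-1)+\delta(m-\mb)=\delta(m-1)$, $\delta\mb+\delta(m-\mb)=\delta m$, $\delta(\mb+1)+\delta(m-\mb-1)=\delta(m-1)$ and $\delta\mb+\delta(m-\mb-1)=\delta(m-1)$ collapses the exponents of $M$: one finds $\xi_{j\toone i,i}=\xi_{j\totwo i,i}=\zeta_{j\to i,i}$, and likewise $\xi_{j\tor i,j}=\zeta_{j\to i,j}$, $\xi_{i\tor j,i}=\zeta_{i\to j,i}$, $\xi_{i\tor j,j}=\zeta_{i\to j,j}$ for $r=1,2$, while $\xi_{j\toone i\toone j}+\xi_{j\totwo i\totwo j}=\zeta_{j\to i\to j}$ because the weights $\mb$ and $m-\mb$ add up to $m$. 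The combinatorial coefficients likewise add: $\frac{\mb(\mb-1)}{2}+\frac{m(m-1)-\mb(\mb-1)}{2}=\frac{m(m-1)}{2}$ and $\frac{\mb(\mb+1)}{2}+\frac{m(m+1)-\mb(\mb+1)}{2}=\frac{m(m+1)}{2}$. Substituting all this into \eqref{e:maincondn4} yields exactly \eqref{e:maincondn7}. Hence all hypotheses of Theorem~\ref{t:mainres1} hold for the chosen identified paths, and \eqref{e:swsys} is GES under some admissible switching signal.

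Finally, for the periodicity statement I would not merely quote the signal produced by Theorem~\ref{t:mainres1} (which is non-periodic by construction), but exhibit a periodic one and re-run the same induction. Let $w_{0},w_{1},\ldots,w_{n-1}$ be the cycle obtained by concatenating $P_{G}^{j\to i}$ and $P_{G}^{i\to j}$ (so $w_{0}=j$, $w_{\ell}=i$ for the appropriate $\ell$, and $w_{n}=j=w_{0}$), and take $\sigma$ to repeat this cycle indefinitely, dwelling $p$ time units on $i$, $q$ on $j$, and $\delta$ on every other subsystem. Consecutive subsystems along the cycle are joined by edges of $E(\P)$ and all dwell times lie in $\{\delta,\ldots,\Delta\}$, so $\sigma\in\Sw$; and $\sigma$ is periodic with period $\zeta_{j\to i\to j}$ by construction. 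GES is then proved by induction on the length of an initial segment $\overline{W}$ of $W=\cdots A_{\sigma(1)}A_{\sigma(0)}$ exactly as in Theorem~\ref{t:mainres1}: pick $c$ large enough for all $\overline{W}$ with $\abs{\overline{W}}\le\zeta_{j\to i\to j}$; for longer segments write $\overline{W}=LR$ with $\abs{L}=\zeta_{j\to i\to j}$, so that $L$ consists of exactly $m$ full cycles and hence contains exactly $m$ blocks $A_{i}^{p}$ and $m$ blocks $A_{j}^{q}$; commute these blocks to the front to write $L=\A^{m}L_{1}+L_{2}$, where now only the four commutator families $F_{j\to i,i}^{p,\delta},F_{j\to i,j}^{q,\delta},F_{i\to j,i}^{p,\delta},F_{i\to j,j}^{q,\delta}$ appear in $L_{2}$, with the multiplicities and term-lengths recorded by the coefficients $\frac{m(m-1)}{2},\frac{m(m+1)}{2}$ and exponents $\zeta_{\bullet}$ in \eqref{e:maincondn7}; then bound $\norm{\overline{W}}$ by sub-multiplicativity and sub-additivity, apply Fact~\ref{fact:m_defn}, the induction hypothesis, and \eqref{e:maincondn7}, mirroring \eqref{e:pf1_step1}. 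The only real obstacle is the bookkeeping of the second paragraph (verifying the $\xi$'s and coefficient pairs collapse cleanly to the $\zeta$'s and coefficients of \eqref{e:maincondn7}); all the genuinely analytic content is already supplied by Theorem~\ref{t:mainres1}.
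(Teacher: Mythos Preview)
Your proposal is correct and follows essentially the same route as the paper: identify the two path pairs, verify that the \(\xi\)-exponents collapse to the \(\zeta\)-exponents and that the split coefficients \(\frac{\mb(\mb\mp 1)}{2}\) and \(\frac{m(m\mp 1)-\mb(\mb\mp 1)}{2}\) sum to \(\frac{m(m\mp 1)}{2}\), so that \eqref{e:maincondn4} becomes \eqref{e:maincondn7}, and then invoke Theorem~\ref{t:mainres1}.

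The only point where you do more than the paper is the periodicity step. You state that the signal of Theorem~\ref{t:mainres1} is ``non-periodic by construction'' and therefore propose to build a separate periodic signal and re-run the induction. In fact this is unnecessary: once \(P_{G}^{j\toone i}=P_{G}^{j\totwo i}\) and \(P_{G}^{i\toone j}=P_{G}^{i\totwo j}\), the two blocks of subsystems in the construction of Theorem~\ref{t:mainres1} coincide, so the pattern ``block~1 followed by \(s\) copies of block~2, \(s=1,2,3,\ldots\)'' degenerates to an infinite repetition of a single block, i.e., a periodic signal. The paper simply observes this and is done. Your alternative of exhibiting the periodic cycle directly and redoing the induction with \(\abs{L}=\zeta_{j\to i\to j}\) is of course also valid and arguably more self-contained; it just duplicates work already absorbed in Theorem~\ref{t:mainres1}.
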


   \begin{proof}
        We apply Theorem \ref{t:mainres1} to arrive at the assertion of Corollary \ref{cor:mainres2}.

        Let
        \begin{align}
        \label{e:pf2_step1}
            P_{G}^{j\toone i} &= P_{G}^{j\totwo i} = P_{G}^{j\to i} = w_{0},(w_{0},w_{1}),w_{1},\ldots,w_{\ell-1},(w_{\ell-1},w_{\ell}),w_{\ell},\nonumber\\
            P_{G}^{i\toone j} &= P_{G}^{i\totwo j} = P_{G}^{i\to j} = w_{\ell},(w_{\ell},w_{\ell+1}),w_{\ell+1},\ldots,w_{n-1},
            (w_{n-1},w_{n}),w_{n}
        \end{align}
        satisfy conditions \eqref{e:maincondn5}-\eqref{e:maincondn7}. Consider the switching signal \(\sigma\in\Sw\) constructed in the proof of Theorem \ref{t:mainres1}. In view of \eqref{e:pf2_step1}, \(\sigma\) is periodic. Indeed, \(\sigma\) activates the sequence of subsystems \(w_{0},w_{1},\ldots,w_{\ell},w_{\ell+1},\ldots,w_{n-1}\) repeatedly and dwells on the subsystems \(i\), \(j\) and \(k\in\P\setminus\{i,j\}\) for \(p\), \(q\) and \(\delta\) units of time, respectively. Mathematically speaking, \(\sigma\) satisfies
        \begin{align*}
            &\sigma(\tau_{0}) = w_{0},\sigma(\tau_{1}) = w_{1},\ldots,\sigma(\tau_{\ell})=w_{\ell},\sigma(\tau_{\ell+1})=w_{\ell+1},\ldots,\\
            &\sigma(\tau_{n-1})=w_{n-1},
            \sigma(\tau_{n}) = w_{0},\ldots, \sigma(\tau_{2n-1}) = w_{n-1},
            \sigma(\tau_{2n}) = w_{0},\\
            &\ldots
        \end{align*}
        with
        \begin{align*}
            \tau_{h+1}-\tau_{h} =
            \begin{cases}
                p,\:\:&\text{if}\:\sigma(\tau_{h}) = i,\\
                q,\:\:&\text{if}\:\sigma(\tau_{h}) = j,\\
                \delta,\:\:&\text{if}\:\sigma(\tau_{h})\in\P\setminus\{i,j\}.
            \end{cases}
        \end{align*}

        We now show that the switched system \eqref{e:swsys} is GES under the above \(\sigma\). We have
        \begin{align*}
            F_{j\toone i,i}^{p,\delta} &= F_{j\totwo i,i}^{p,\delta} = F_{j\to i,i}^{p,\delta},\:\:
            F_{j\toone i,j}^{q,\delta} = F_{j\totwo i,j}^{q,\delta} = F_{j\to i,j}^{q,\delta},\\
            F_{i\toone j,i}^{p,\delta} &= F_{i\totwo j,i}^{p,\delta} = F_{i\to j,i}^{p,\delta},\:\:
            F_{i\toone j,j}^{q,\delta} = F_{i\totwo j,j}^{q,\delta} = F_{i\to j,j}^{q,\delta},
        \end{align*}
        leading to
        \begin{align*}
            \varepsilon_{j\toone i,i} &= \varepsilon_{j\totwo i,i} = \varepsilon_{j\to i,i},\:\:
            \varepsilon_{j\toone i,j} = \varepsilon_{j\totwo i,j} = \varepsilon_{j\to i,j},\\
            \varepsilon_{i\toone j,i} &= \varepsilon_{i\totwo j,i} = \varepsilon_{i\to j,i},\:\:
            \varepsilon_{i\toone j,j} = \varepsilon_{i\totwo j,j} = \varepsilon_{i\to j,j}.
        \end{align*}
        Moreover,
        \begin{align*}
            &\xi_{j\toone i,i} = \abs{P_{G}^{j\to i}}\delta(m-1) + \abs{P_{G}^{i\to j}}\delta m + p(m-1) + qm = \zeta_{j\to i,i},\\
            &\xi_{j\totwo i,i} = \abs{P_{G}^{j\to i}}\delta(m-1) + \abs{P_{G}^{i\to j}}\delta m + p(m-1) + qm = \zeta_{j\to i,i},\\
            &\xi_{j\toone i,j} = \abs{P_{G}^{j\to i}}\delta(m-1) + \abs{P_{G}^{i\to j}}\delta m + pm + q(m-1) = \zeta_{j\to i,j},\\
            &\xi_{j\totwo i,j} = \abs{P_{G}^{j\to i}}\delta(m-1) + \abs{P_{G}^{i\to j}}\delta m + pm + q(m-1) = \zeta_{j\to i,j},\\
            &\xi_{i\toone j,i} = \abs{P_{G}^{j\to i}}\delta m + \abs{P_{G}^{i\to j}}\delta (m-1) + p(m-1) + qm = \zeta_{i\to j,i},\\
            &\xi_{i\totwo j,i} = \abs{P_{G}^{j\to i}}\delta m + \abs{P_{G}^{i\to j}}\delta (m-1) + p(m-1) + qm = \zeta_{i\to j,i},\\
            &\xi_{i\toone j,j} = \abs{P_{G}^{j\to i}}\delta m + \abs{P_{G}^{i\to j}}\delta (m-1) + p m + q(m-1) = \zeta_{i\to j,j},\\
            &\xi_{i\totwo j,j} = \abs{P_{G}^{j\to i}}\delta m + \abs{P_{G}^{i\to j}}\delta (m-1) + pm + q(m-1) = \zeta_{i\to j,j},\\
            &\xi_{j\toone i\toone j} = \abs{P_{G}^{j\to i}}\delta\mb + \abs{P_{G}^{i\to j}}\delta\mb + (p+q)\mb,\\
            &\xi_{j\totwo i\totwo j} = \abs{P_{G}^{j\to i}}\delta m - \abs{P_{G}^{j\to i}}\delta\mb + \abs{P_{G}^{i\to j}}\delta m - \abs{P_{G}^{i\to j}}\delta\mb
            + (p+q)m - (p+q)\mb,\\
            \intertext{and}
            &\xi_{j\toone i\toone j} + \xi_{j\totwo i\totwo j} = \Biggl(\biggl(\abs{P_{G}^{j\to i}}+\abs{P_{G}^{i\to j}}\biggr)\delta+p+q\Biggr)m = \zeta_{j\to i\to j}.
        \end{align*}

        Consequently, the left-hand side of \eqref{e:maincondn4} becomes
        \begin{align*}
            &\rho e^{\lambda m} + \Biggl(\frac{\mb(\mb-1)}{2}M^{\zeta_{j{\to} i,i}}\varepsilon_{j{\to} i,i}
            +\frac{m(m-1)-\mb(\mb-1)}{2}M^{\zeta_{j{\to} i,i}}\varepsilon_{j{\to} i,i}\\
            &\quad\quad\quad+\frac{\mb(\mb+1)}{2}M^{\zeta_{j{\to} i,j}}\varepsilon_{j{\to} i,j}
            +\frac{m(m+1)-\mb(\mb+1)}{2}M^{\zeta_{j{\to} i,j}}\varepsilon_{j{\to} i,j}\\
            &\quad\quad\quad+\frac{\mb(\mb+1)}{2}M^{\zeta_{i{\to} j,i}}\varepsilon_{i{\to} j,i}
            +\frac{m(m+1)-\mb(\mb+1)}{2}M^{\zeta_{i{\to} j,i}}\varepsilon_{i{\to} j,i}\\
            &\quad\quad\quad+\frac{\mb(\mb+1)}{2}M^{\zeta_{i{\to} j,j}}\varepsilon_{i{\to} j,j}
            +\frac{m(m+1)-\mb(\mb+1)}{2}M^{\zeta_{i{\to} j,j}}\varepsilon_{i{\to} j,j}\Biggr)
            \times e^{\lambda\zeta_{j\to i\to j}}\\
            =&\rho e^{\lambda m} + \Biggl(\frac{m(m-1)}{2}M^{\zeta_{j\to i,i}}\varepsilon_{j\to i,i}+\frac{m(m+1)}{2}M^{\zeta_{j\to i,j}}\varepsilon_{j\to i,j}\\
            &\quad\quad\quad+\frac{m(m+1)}{2}M^{\zeta_{i\to j,i}}\varepsilon_{i\to j,i}+\frac{m(m+1)}{2}M^{\zeta_{i\to j,j}}\varepsilon_{i\to j,j}\Biggr)\times e^{\lambda\zeta_{j\to i\to j}}.
        \end{align*}
        In view of condition \eqref{e:maincondn7}, the above expression is at most equal to \(1\). It follows by the assertion of Theorem \ref{t:mainres1} that the switched system \eqref{e:swsys} is stabilizable.
   \end{proof}

   %\begin{rem}
    %\label{rem:cor1_des}
    %\rm{
        Corollary \ref{cor:mainres2} asserts that if the underlying directed graph, \(G(\P,E(\P))\), of the switched system \eqref{e:swsys} admits a pair of paths, \((P_{G}^{j\to i},P_{G}^{i\to j})\), \(i,j\in\P\) satisfying Assumption \ref{a:stable_combi}, \(P_{G}^{j\to i} = w_{0},(w_{0},w_{1}),w_{1},\ldots,w_{\ell-1},(w_{\ell-1},w_{\ell}), w_{\ell}\) and \(P_{G}^{i\to j} = w_{\ell}, (w_{\ell},w_{\ell+1}),w_{\ell+1},\ldots,w_{n-1}\),\((w_{n-1},w_{n})\),\\\(w_{n}\), for which the Euclidean norms of (matrix) commutators \(F_{j\to i,i}^{p,\delta}\), \(F_{j\to i,j}^{q,\delta}\), \(F_{i\to j,i}^{p,\delta}\) and \(F_{i\to j,j}^{q,\delta}\) are bounded above by small enough scalars \(\varepsilon_{j\to i,i}\), \(\varepsilon_{j\to i,j}\), \(\varepsilon_{i\to j,i}\) and \(\varepsilon_{i\to j,j}\), respectively, such that condition \eqref{e:maincondn7} holds, then \eqref{e:swsys} is GES under a periodic switching signal \(\sigma\in\Sw\). The stabilizing switching signal activates the sequence of subsystems, \(w_{0},w_{1},\ldots,w_{\ell},\ldots,w_{n-1}\) repeatedly and dwells on the subsystems \(i\), \(j\) and \(k\in\P\setminus\{i,j\}\) for \(p\), \(q\) and \(\delta\) units of time, respectively.
    %}
    %\end{rem}

    \begin{example}
    \label{rem:numex2}
    \rm{
         Let
         \begin{align*}
            P_{G}^{j\to i} &= j,(j,2),2,(2,1),1,(1,i),i\\
             \intertext{and}
            P_{G}^{i\to j} &= i,(i,1),1,(1,j),j
         \end{align*}
         satisfy conditions \eqref{e:maincondn5}-\eqref{e:maincondn7}. Then the switching signal that activates the sequence of subsystems
   \[
        j,2,1,i,1
   \]
   repeatedly and dwells on subsystems \(i\), \(j\) and \(k\in\{1,2\}\) for \(p\), \(q\) and \(\delta\) units of time, respectively,
   ensures GES of \eqref{e:swsys}.
   }
   \end{example}

   Similar to the setting of Theorem \ref{t:mainres1}, our stability conditions in Corollary \ref{cor:mainres2} do not assume \((i,j)\) and/or \((j,i)\in E(\P)\), and works as long as it is possible to reach subsystem \(i\) from subsystem \(j\) and vice-versa, through a pair of favourable paths. Of course, if the choice of \(i,j\in\P\) that satisfy Assumption \ref{a:stable_combi} is such that the switches between them are partially restricted or unrestricted, then one requires to check simpler conditions on the subsystems matrices, \(A_{\ell}\), \(\ell\in\P\), and the directed graph, \(G(\P,E(\P))\), for the stabilization of \eqref{e:swsys}. Below we enumerate these settings as special cases of Theorem \ref{t:mainres1} and Corollary \ref{cor:mainres2}.

   For two pairs of \(j\to i\) and \(i\to j\) paths, \((P_{G}^{j\toone i},P_{G}^{i\toone j})\) and \((P_{G}^{j\totwo i},P_{G}^{i\totwo j})\) on \(G(\P,E(\P))\), we define
    \begin{align*}
    %\label{e:kappa_defn}
        \kappa_{j\toone i,i} &= \abs{P_{G}^{j\toone i}}\delta(\mb-1) + \abs{P_{G}^{j\totwo i}}\delta(m-\mb) + p(m-1) + qm,\nonumber\\
        \kappa_{j\totwo i,i} &= \abs{P_{G}^{j\toone i}}\delta\mb + \abs{P_{G}^{j\totwo i}}\delta(m-\mb-1) + p(m-1) + qm,\nonumber\\
        \kappa_{j\toone i,j} &= \abs{P_{G}^{j\toone i}}\delta(\mb -1) + \abs{P_{G}^{j\totwo i}}\delta(m-\mb) + pm + q(m-1),\\
        \kappa_{j\totwo i,j} &= \abs{P_{G}^{j\toone i}}\delta\mb + \abs{P_{G}^{j\totwo i}}\delta(m-\mb-1) + pm + q(m-1),\nonumber\\
        \kappa_{j\toone i\toone j} &= \biggl(\abs{P_{G}^{j\toone i}}\delta+p+q\biggr)\mb,\nonumber\\
        \kappa_{j\totwo i\totwo j} &= \biggl(\abs{P_{G}^{j\totwo i}}\delta+p+q\biggr)(m-\mb).\nonumber
   \end{align*}
   Also, for \(P_{G}^{j\toone i} = P_{G}^{j\totwo i} = P_{G}^{j\to i}\) and \(P_{G}^{i\toone j} = P_{G}^{i\totwo j} = P_{G}^{i\to j}\), we define the quantities
   \begin{align*}
   %\label{e:kappabar_defn}
        \kb_{j\to i,i} &= \abs{P_{G}^{j\to i}}\delta(m-1)+p(m-1)+qm,\nonumber\\
        \kb_{j\to i,j} &= \abs{P_{G}^{j\to i}}\delta(m-1)+pm+q(m-1),\\
        \kb_{j\to i\to j} &= \biggl(\abs{P_{G}^{j\to i}}\delta+p+q\biggr)m.\nonumber
   \end{align*}

   \begin{corollary}
   \label{cor:mainres3}
        Let \(i,j\in\P\) satisfy Assumption \ref{a:stable_combi}, \((i,j)\in E(\P)\), \((j,i)\notin E(\P)\), and \(\lambda\) be an arbitrary positive number satisfying \eqref{e:maincondn1}. The following hold:\\
        %\begin{enumerate}[label = (\alph*), leftmargin = *]
             (a) Suppose that \(G(\P,E(\P))\) admits two \(j\to i\) paths, \(P_{G}^{j\tor i}\), \(r=1,2\), such that there exist scalars \(\varepsilon_{j\tor i,i}\), \(\varepsilon_{j\tor i,j}\), \(r=1,2\) , small enough, satisfying
                \begin{align}
                \label{e:maincondn8}
                    \norm{F_{j\tor i,i}^{p,\delta}}\leq\varepsilon_{j\tor i,i},\:\:
                    \norm{F_{j\tor i,j}^{q,\delta}}\leq\varepsilon_{j\tor i,j},\:\:r=1,2,
                \end{align}
                and
                \begin{align}
                \label{e:maincondn9}
                    &\rho e^{\lambda m} + \Biggl(\frac{\mb(\mb-1)}{2}M^{\kappa_{j\overset{1}{\to} i,i}}\varepsilon_{j\overset{1}{\to} i,i}
            +\frac{m(m-1)-\mb(\mb-1)}{2}M^{\kappa_{j\overset{2}{\to} i,i}}\varepsilon_{j\overset{2}{\to} i,i}\nonumber\\
            &\quad\quad\quad+\frac{\mb(\mb+1)}{2}M^{\kappa_{j\overset{1}{\to} i,j}}\varepsilon_{j\overset{1}{\to} i,j}
            +\frac{m(m+1)-\mb(\mb+1)}{2}M^{\kappa_{j\overset{2}{\to} i,j}}\varepsilon_{j\overset{2}{\to} i,j}\Biggr)\nonumber\\
            &\qquad\qquad\times e^{\lambda\biggl(\kappa_{j\overset{1}{\to}i\overset{1}{\to}j}+\kappa_{j\overset{2}{\to}i\overset{2}{\to}j}\biggr)}\leq 1.
                \end{align}
                Then there exists a non-periodic switching signal \(\sigma\in\Sw\) under which the switched system \eqref{e:swsys} is GES.\\
             (b) Suppose that \(G(\P,E(\P))\) admits a \(j\to i\) path, \(P_{G}^{j\to i}\), such that there exist scalars \(\varepsilon_{j\to i,i}\), \(\varepsilon_{j\to i,j}\), small enough, satisfying
                \begin{align}
                \label{e:maincondn10}
                    \norm{F_{j\to i,i}^{p,\delta}}\leq\varepsilon_{j\to i,i}\:\:\text{and}\:\:\norm{F_{j\to i,j}^{q,\delta}}\leq\varepsilon_{j\to i,j},
                \end{align}
                and
                \begin{align}
                \label{e:maincondn11}
                     \rho e^{\lambda m} + \Biggl(&\frac{m(m-1)}{2}M^{\kb_{j\to i,i}}\varepsilon_{j\to i,i}+\frac{m(m+1)}{2}M^{\kb_{j\to i,j}}\varepsilon_{j\to i,j}\Biggr)
                     \times e^{\lambda\kb_{j\to i\to j}}\leq 1.
                \end{align}
                Then there exists a periodic switching signal \(\sigma\in\Sw\) under which the switched system \eqref{e:swsys} is GES.
        %\end{enumerate}
   \end{corollary}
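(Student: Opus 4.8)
The plan is to derive both parts of the corollary from Theorem~\ref{t:mainres1} and Corollary~\ref{cor:mainres2} by taking the $i\to j$ leg to be the single-edge path $P_{G}^{i\to j} = i,(i,j),j$, which is a legitimate $i\to j$ path on $G(\P,E(\P))$ precisely because $(i,j)\in E(\P)$. This leg has length $\abs{P_{G}^{i\to j}} = 0$, so by the remark following Definition~\ref{d:comm} the associated commutators $F_{i\to j,i}^{p,\delta}$ and $F_{i\to j,j}^{q,\delta}$ both vanish, and the corresponding $\varepsilon$-scalars may be taken to be $0$.

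For part~(a), I would apply Theorem~\ref{t:mainres1} with $P_{G}^{i\toone j} = P_{G}^{i\totwo j} = i,(i,j),j$ and with $P_{G}^{j\tor i}$, $r=1,2$, the two given $j\to i$ paths. Since $\abs{P_{G}^{i\tor j}} = 0$, we set $\varepsilon_{i\tor j,i} = \varepsilon_{i\tor j,j} = 0$, so that \eqref{e:maincondn3} holds trivially while \eqref{e:maincondn2} becomes exactly \eqref{e:maincondn8}. Substituting $\abs{P_{G}^{i\tor j}} = 0$ into the definitions of the $\xi$-quantities preceding Theorem~\ref{t:mainres1} and cancelling the vanishing terms yields the identities $\xi_{j\toone i,i} = \kappa_{j\toone i,i}$, $\xi_{j\totwo i,i} = \kappa_{j\totwo i,i}$, $\xi_{j\toone i,j} = \kappa_{j\toone i,j}$, $\xi_{j\totwo i,j} = \kappa_{j\totwo i,j}$, $\xi_{j\toone i\toone j} = \kappa_{j\toone i\toone j}$ and $\xi_{j\totwo i\totwo j} = \kappa_{j\totwo i\totwo j}$. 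With the four $\varepsilon_{i\tor j,\cdot}$-terms set to zero, the left-hand side of \eqref{e:maincondn4} then coincides term-by-term with that of \eqref{e:maincondn9}; hence every hypothesis of Theorem~\ref{t:mainres1} is met, and it produces a non-periodic $\sigma\in\Sw$ under which \eqref{e:swsys} is GES.

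For part~(b), I would run the same reduction through Corollary~\ref{cor:mainres2}: taking $P_{G}^{i\to j} = i,(i,j),j$ forces $\varepsilon_{i\to j,i} = \varepsilon_{i\to j,j} = 0$, so that \eqref{e:maincondn6} holds automatically and \eqref{e:maincondn5} reduces to \eqref{e:maincondn10}; substituting $\abs{P_{G}^{i\to j}} = 0$ into the $\zeta$-quantities gives $\zeta_{j\to i,i} = \kb_{j\to i,i}$, $\zeta_{j\to i,j} = \kb_{j\to i,j}$ and $\zeta_{j\to i\to j} = \kb_{j\to i\to j}$, whereupon \eqref{e:maincondn7} collapses to \eqref{e:maincondn11}. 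Corollary~\ref{cor:mainres2} then delivers a periodic $\sigma\in\Sw$ rendering \eqref{e:swsys} GES.

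Conceptually the argument is routine; the points needing care are (i) confirming that, once the $i\to j$ leg is the single edge $(i,j)$, every transition used by the switching signals constructed in the proofs of Theorem~\ref{t:mainres1} and Corollary~\ref{cor:mainres2} still lies in $E(\P)$ --- which holds because those signals use only the edges appearing in the $j\to i$ paths together with the edge $(i,j)$ --- and (ii) carrying out the (elementary but slightly tedious) bookkeeping that reduces the $\xi$- and $\zeta$-quantities to the $\kappa$- and $\kb$-quantities. The companion case $(j,i)\in E(\P)$, $(i,j)\notin E(\P)$ is handled symmetrically, with the roles of $i$ and $j$ interchanged.
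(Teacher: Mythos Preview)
Your proposal is correct and follows essentially the same route as the paper: both take $P_{G}^{i\tor j}=i,(i,j),j$ (so $\abs{P_{G}^{i\tor j}}=0$ and the associated commutators and $\varepsilon$'s vanish), substitute into the $\xi$- (resp.\ $\zeta$-) quantities to recover the $\kappa$- (resp.\ $\kb$-) quantities, and then invoke Theorem~\ref{t:mainres1} for part~(a) and Corollary~\ref{cor:mainres2} for part~(b). Your added remark about admissibility of the transitions is a harmless extra check; the bookkeeping identities you list are exactly those the paper verifies.
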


   \begin{proof}
        Since \((i,j)\in E(\P)\), the graph \(G(\P,E(\P))\) admits an \(i\to j\) path, \(P_{G}^{i\to j} = i,(i,j),j\), that satisfies \(\abs{P_{G}^{i\to j}} = 0\), \(F_{i\to j,i} = 0\) and \(F_{i\to j,j} = 0\).

        %\begin{enumerate}[label = (\alph*), leftmargin = *]
         Suppose that (a) holds. We have two \(j\to i\) paths, \(P_{G}^{j\toone i}\) and \(P_{G}^{j\totwo i}\), that satisfy conditions \eqref{e:maincondn8}-\eqref{e:maincondn9}. Let \(P_{G}^{i\toone j} = P_{G}^{i\totwo j} = P_{G}^{i\to j}\). Thus, the choice \(\varepsilon_{i\toone j,i} = \varepsilon_{i\totwo j,i} = \varepsilon_{i\toone j,j} = \varepsilon_{i\totwo j,j} = 0\) holds. We apply Theorem \ref{t:mainres1} to show that \eqref{e:swsys} is stabilizable in this setting.

        We have
        \begin{align*}
            \xi_{j\toone i,i} &= \abs{P_{G}^{j\toone i}}\delta(\mb-1)+\abs{P_{G}^{j\totwo i}}\delta(m-\mb)+p(m-1)+qm=\kappa_{j\toone i,i},\\
            \xi_{j\totwo i,i} &= \abs{P_{G}^{j\toone i}}\delta\mb+\abs{P_{G}^{j\totwo i}}\delta(m-\mb-1)+p(m-1)+qm=\kappa_{j\totwo i,i},\\
            \xi_{j\toone i,j} &= \abs{P_{G}^{j\toone i}}\delta(\mb-1)+\abs{P_{G}^{j\totwo i}}\delta(m-\mb)+pm+q(m-1)=\kappa_{j\toone i,j},\\
            \xi_{j\totwo i,j} &= \abs{P_{G}^{j\toone i}}\delta\mb+\abs{P_{G}^{j\totwo i}}\delta(m-\mb-1)+pm+q(m-1)=\kappa_{j\totwo i,j},\\
            \xi_{j\toone i\toone j} &= \biggl(\abs{P_{G}^{j\toone i}}\delta+p+q\biggr)\mb = \kappa_{j\toone i\toone j},\\
            \intertext{and}
            \xi_{j\totwo i\totwo j} &= \biggl(\abs{P_{G}^{j\totwo i}}\delta+p+q\biggr)(m-\mb) = \kappa_{j\totwo i\totwo j}.
        \end{align*}

        Consequently, the left-hand side of \eqref{e:maincondn4} becomes
        \begin{align*}
            &\rho e^{\lambda m} + \Biggl(\frac{\mb(\mb-1)}{2}M^{\kappa_{j\overset{1}{\to} i,i}}\varepsilon_{j\overset{1}{\to} i,i}
            +\frac{m(m-1)-\mb(\mb-1)}{2}M^{\kappa_{j\overset{2}{\to} i,i}}\varepsilon_{j\overset{2}{\to} i,i}\\
            &\quad\quad\quad+\frac{\mb(\mb+1)}{2}M^{\kappa_{j\overset{1}{\to} i,j}}\varepsilon_{j\overset{1}{\to} i,j}
            +\frac{m(m+1)-\mb(\mb+1)}{2}M^{\kappa_{j\overset{2}{\to} i,j}}\varepsilon_{j\overset{2}{\to} i,j}\\
            &\quad\quad\quad+0+0+0+0\Biggr)
            \times e^{\lambda\biggl(\kappa_{j\overset{1}{\to}i\overset{1}{\to}j}+\kappa_{j\overset{2}{\to}i\overset{2}{\to}j}\biggr)}\\
            =& \rho e^{\lambda m} + \Biggl(\frac{\mb(\mb-1)}{2}M^{\kappa_{j\overset{1}{\to} i,i}}\varepsilon_{j\overset{1}{\to} i,i}
            +\frac{m(m-1)-\mb(\mb-1)}{2}M^{\kappa_{j\overset{2}{\to} i,i}}\varepsilon_{j\overset{2}{\to} i,i}\\
            &\quad\quad\quad+\frac{\mb(\mb+1)}{2}M^{\kappa_{j\overset{1}{\to} i,j}}\varepsilon_{j\overset{1}{\to} i,j}
            +\frac{m(m+1)-\mb(\mb+1)}{2}M^{\kappa_{j\overset{2}{\to} i,j}}\varepsilon_{j\overset{2}{\to} i,j}\Biggr)\\
            &\qquad\qquad\times e^{\lambda\biggl(\kappa_{j\overset{1}{\to}i\overset{1}{\to}j}+\kappa_{j\overset{2}{\to}i\overset{2}{\to}j}\biggr)}.
        \end{align*}
        In view of condition \eqref{e:maincondn9}, the above expression is at most equal to \(1\). It follows by the assertion of Theorem \ref{t:mainres1} that the switched system \eqref{e:swsys} is GES under a non-periodic \(\sigma\in\Sw\).

         Now, suppose that (b) holds. We have a \(j\to i\) path, \(P_{G}^{j\to i}\), that satisfies conditions \eqref{e:maincondn10}-\eqref{e:maincondn11}. Recall that by construction of \(P_{G}^{i\to j}\), the choice \(\varepsilon_{i\to j,i} = \varepsilon_{i\to j,j} = 0\) holds. We apply Corollary \ref{cor:mainres2} to show that \eqref{e:swsys} is stabilizable in this setting.

        We have
        \begin{align*}
            \zeta_{j\to i,i} &= \abs{P_{G}^{j\to i}}\delta(m-1) + p(m-1) + qm = \kb_{j\to i,i},\\
            \zeta_{j\to i,j} &= \abs{P_{G}^{j\to i}}\delta(m-1) + pm + q(m-1) = \kb_{j\to i,j},\\
            \intertext{and}
            \zeta_{j\to i\to j} &= \biggl(\abs{P_{G}^{j\to i}}\delta + p + q\biggr)m = \kb_{j\to i\to j}.
        \end{align*}

        Consequently, the left-hand side of \eqref{e:maincondn7} becomes
        \begin{align*}
            \rho e^{\lambda m} + \Biggl(&\frac{m(m-1)}{2}M^{\kb_{j\to i,i}}\varepsilon_{j\to i,i}+\frac{m(m+1)}{2}M^{\kb_{j\to i,j}}\varepsilon_{j\to i,j}
            +0+0\Biggr)\times e^{\lambda\zeta_{j\to i\to j}}\\
            =\rho e^{\lambda m} + \Biggl(&\frac{m(m-1)}{2}M^{\kb_{j\to i,i}}\varepsilon_{j\to i,i}+\frac{m(m+1)}{2}M^{\kb_{j\to i,j}}\varepsilon_{j\to i,j}
            \Biggr)
            \times e^{\lambda\zeta_{j\to i\to j}}.
        \end{align*}
        In view of condition \eqref{e:maincondn11}, the above expression is at most equal to \(1\). It follows by the assertion of Corollary \ref{cor:mainres2} that there exists a periodic \(\sigma\in\Sw\) that ensures GES of \eqref{e:swsys}.
        %\end{enumerate}
   \end{proof}

    For two pairs of a \(j\to i\) path and an \(i\to j\) path, \((P_{G}^{j\toone i},P_{G}^{i\toone j})\) and \((P_{G}^{j\totwo i},P_{G}^{i\totwo j})\) on \(G(\P,E(\P))\), we define
    \begin{align*}
    %\label{e:chi_defn}
        \chi_{i\toone j,i} &= \abs{P_{G}^{i\toone j}}\delta(\mb-1)+\abs{P_{G}^{i\totwo j}}\delta(m-\mb)+p(m-1)+qm,\nonumber\\
        \chi_{i\totwo j,i} &= \abs{P_{G}^{i\toone j}}\delta\mb + \abs{P_{G}^{i\totwo j}}\delta(m-\mb-1)+p(m-1)+qm,\nonumber\\
        \chi_{i\toone j,j} &= \abs{P_{G}^{i\toone j}}\delta(\mb-1) + \abs{P_{G}^{i\totwo j}}\delta(m-\mb)+ pm + q(m-1),\\
        \chi_{i\totwo j,j} &= \abs{P_{G}^{i\toone j}}\delta\mb + \abs{P_{G}^{i\totwo j}}\delta(m-\mb-1)+ pm + q(m-1),\nonumber\\
        \chi_{j\toone i\toone j} &= \biggl(\abs{P_{G}^{i\toone j}}\delta+p+q\biggr)\mb,\nonumber\\
        \chi_{j\totwo i\totwo j} &= \biggl(\abs{P_{G}^{i\totwo j}}\delta+p+q\biggr)(m-\mb),\nonumber
    \end{align*}
    and for \(P_{G}^{j\toone i} = P_{G}^{j\totwo i} = P_{G}^{j\to i}\) and \(P_{G}^{i\toone j} = P_{G}^{i\totwo j} = P_{G}^{i\to j}\), we define
   \begin{align*}
   %\label{e:chibar_defn}
        \cb_{i\to j,i} &= \abs{P_{G}^{i\to j}}\delta(m-1) + p(m-1) + qm,\nonumber\\
        \cb_{i\to j,j} &= \abs{P_{G}^{i\to j}}\delta(m-1) + pm + q(m-1),\\
        \cb_{j\to i\to j} &= \biggl(\abs{P_{G}^{i\to j}}\delta+p+q\biggr)m.\nonumber
   \end{align*}

%========================================================================
%========================================================================
%========================================================================
   \begin{corollary}
   \label{cor:mainres4}
        Let \(i,j\in\P\) satisfy Assumption \ref{a:stable_combi}, \((j,i)\in E(\P)\), \((i,j)\notin E(\P)\), and \(\lambda\) be an arbitrary positive number satisfying \eqref{e:maincondn1}. The following hold:\\
        %\begin{enumerate}[label = (\alph*), leftmargin = *]
             (a) Suppose that \(G(\P,E(\P))\) admits two \(i\to j\) paths, \(P_{G}^{i\tor j}\), \(r=1,2\), such that there exist scalars \(\varepsilon_{i\tor j,i}\), \(\varepsilon_{i\tor j,j}\), \(r=1,2\), small enough, satisfying
                \begin{align}
                \label{e:maincondn12}
                    \norm{F_{i\tor j,i}^{p,\delta}}\leq\varepsilon_{i\tor j,i},\:\:
                    \norm{F_{i\tor j,j}^{q,\delta}}\leq\varepsilon_{i\tor j,j},\:\:r = 1,2,
                \end{align}
                and
                \begin{align}
                \label{e:maincondn13}
                    &\rho e^{\lambda m} + \Biggl(\frac{\mb(\mb+1)}{2}M^{\chi_{i\overset{1}{\to} j,i}}\varepsilon_{i\overset{1}{\to} j,i}
            +\frac{m(m+1)-\mb(\mb+1)}{2}M^{\chi_{i\overset{2}{\to} j,i}}\varepsilon_{i\overset{2}{\to} j,i}\nonumber\\
            &\quad\quad\quad+\frac{\mb(\mb+1)}{2}M^{\chi_{i\overset{1}{\to} j,j}}\varepsilon_{i\overset{1}{\to} j,j}
            +\frac{m(m+1)-\mb(\mb+1)}{2}M^{\chi_{i\overset{2}{\to} j,j}}\varepsilon_{i\overset{2}{\to} j,j}\Biggr)\nonumber\\
            &\qquad\qquad\times e^{\lambda\biggl(\chi_{j\overset{1}{\to}i\overset{1}{\to}j}+\chi_{j\overset{2}{\to}i\overset{2}{\to}j}\biggr)}\leq 1.
                \end{align}
                Then there exits a non-periodic switching signal \(\sigma\in\Sw\) under which the switched system \eqref{e:swsys} is GES.\\
             (b) Suppose that \(G(\P,E(\P))\) admits one \(i\to j\) path, \(P_{G}^{i\to j}\), such that there exist scalars \(\varepsilon_{i\to j,i}\), \(\varepsilon_{i\to j,j}\), small enough, satisfying
                \begin{align}
                \label{e:maincondn14}
                    \norm{F_{i\to j,i}^{p,\delta}}\leq\varepsilon_{i\to j,i}\:\:\text{and}\:\:\norm{F_{i\to j,j}^{q,\delta}}\leq\varepsilon_{i\to j,j},
                \end{align}
                and
                \begin{align}
                \label{e:maincondn15}
                     \rho e^{\lambda m} + \Biggl(&\frac{m(m+1)}{2}M^{\cb_{i\to j,i}}\varepsilon_{i\to j,i}+\frac{m(m+1)}{2}M^{\cb_{i\to j,j}}\varepsilon_{i\to j,j}\Biggr)
                     \times e^{\lambda\cb_{j\to i\to j}}\leq 1.
                \end{align}
                Then there exists a periodic switching signal \(\sigma\in\Sw\) under which the switched system \eqref{e:swsys} is GES.
        %\end{enumerate}
   \end{corollary}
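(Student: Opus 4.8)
The plan is to treat Corollary \ref{cor:mainres4} as the exact mirror image of Corollary \ref{cor:mainres3}, with the roles of the two subsystems in the degenerate edge interchanged. Since \((j,i)\in E(\P)\), the graph \(G(\P,E(\P))\) admits the length-zero \(j\to i\) path \(P_{G}^{j\to i} = j,(j,i),i\), and by the observation recorded just after Definition \ref{d:comm} we have \(\abs{P_{G}^{j\to i}}=0\) together with \(F_{j\to i,i}=0\) and \(F_{j\to i,j}=0\). This trivial path will be used in place of the trivial \(i\to j\) path \(i,(i,j),j\) that drove the proof of Corollary \ref{cor:mainres3}.

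For part (a), I would take the two favourable \(i\to j\) paths supplied by the hypothesis and pair each of them with the trivial \(j\to i\) path, i.e.\ set \(P_{G}^{j\toone i}=P_{G}^{j\totwo i}=P_{G}^{j\to i}\). Then the choice \(\varepsilon_{j\toone i,i}=\varepsilon_{j\totwo i,i}=\varepsilon_{j\toone i,j}=\varepsilon_{j\totwo i,j}=0\) is admissible in Theorem \ref{t:mainres1}, which kills the four summands in \eqref{e:maincondn4} carrying those scalars. It then suffices to substitute \(\abs{P_{G}^{j\toone i}}=\abs{P_{G}^{j\totwo i}}=0\) into the definitions of \(\xi_{i\toone j,i}\), \(\xi_{i\totwo j,i}\), \(\xi_{i\toone j,j}\), \(\xi_{i\totwo j,j}\), \(\xi_{j\toone i\toone j}\) and \(\xi_{j\totwo i\totwo j}\) and to verify, component by component, that these collapse precisely to \(\chi_{i\toone j,i}\), \(\chi_{i\totwo j,i}\), \(\chi_{i\toone j,j}\), \(\chi_{i\totwo j,j}\), \(\chi_{j\toone i\toone j}\) and \(\chi_{j\totwo i\totwo j}\), respectively. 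With these substitutions the left-hand side of \eqref{e:maincondn4} becomes verbatim the left-hand side of \eqref{e:maincondn13}, so hypothesis \eqref{e:maincondn13} lets Theorem \ref{t:mainres1} apply and produce a non-periodic \(\sigma\in\Sw\) under which \eqref{e:swsys} is GES.

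For part (b), I would invoke Corollary \ref{cor:mainres2} instead: take the single favourable \(i\to j\) path, pair it with the trivial \(j\to i\) path, and set \(\varepsilon_{j\to i,i}=\varepsilon_{j\to i,j}=0\), which cancels the two summands in \eqref{e:maincondn7} attached to those scalars. Substituting \(\abs{P_{G}^{j\to i}}=0\) into \(\zeta_{i\to j,i}\), \(\zeta_{i\to j,j}\) and \(\zeta_{j\to i\to j}\) reduces them to \(\cb_{i\to j,i}\), \(\cb_{i\to j,j}\) and \(\cb_{j\to i\to j}\), so that \eqref{e:maincondn7} becomes exactly \eqref{e:maincondn15}; Corollary \ref{cor:mainres2} then yields a periodic \(\sigma\in\Sw\) under which \eqref{e:swsys} is GES.

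There is no genuine obstacle here: the construction is forced once one notices the degenerate \(j\to i\) edge, and the entire proof reduces to bookkeeping — tracking which \(\varepsilon\)-scalars are forced to zero by the length-zero path, and verifying that the length-weighted exponents \(\xi_{\bullet}\) (resp.\ \(\zeta_{\bullet}\)) collapse term by term to the tabulated quantities \(\chi_{\bullet}\) (resp.\ \(\cb_{\bullet}\)) listed just before the statement. This parallels the corresponding computation in the proof of Corollary \ref{cor:mainres3} with \(i\) and \(j\) interchanged in the trivial edge; a minor arithmetic slip there is essentially the only way the argument could fail.
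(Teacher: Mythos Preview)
Your proposal is correct and follows exactly the approach of the paper's own proof, which simply observes that \((j,i)\in E(\P)\) yields the length-zero path \(P_{G}^{j\to i}=j,(j,i),i\) with \(\abs{P_{G}^{j\to i}}=0\), \(F_{j\to i,i}=0\), \(F_{j\to i,j}=0\), and then states that the result follows by the same arguments as in Corollary~\ref{cor:mainres3}. In fact you have spelled out the bookkeeping more carefully than the paper does.
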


   \begin{proof}
        Since \((j,i)\in E(\P)\), the graph \(G(\P,E(\P))\) admits a \(j\to i\) path, \(P_{G}^{j\to i} = j,(j,i),i\), that satisfies \(\abs{P_{G}^{j\to i}} = 0\), \(F_{j\to i,i} = 0\) and \(F_{j\to i,j} = 0\).\footnote{Recall that \cite[p. 601]{Bernstein} for a \(\mathcal{M}\in\R^{d\times d}\), \(\norm{\mathcal{M}} = 0\) if and only if \(\mathcal{M} = 0\).} Corollary \ref{cor:mainres4} follows under a similar set of arguments employed for proving Corollary \ref{cor:mainres3}.
   \end{proof}

   \begin{corollary}
   \label{cor:mainres5}
        Let \(i,j\in\P\) satisfy Assumption \ref{a:stable_combi} and \(\lambda\) be an arbitrary positive number satisfying \eqref{e:maincondn1}. Suppose that \((i,j)\) and \((j,i)\in E(\P)\). Then the switched system \eqref{e:swsys} is stabilizable.
   \end{corollary}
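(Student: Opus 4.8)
The plan is to obtain this statement as the degenerate instance of Corollary \ref{cor:mainres2} (equivalently, of Theorem \ref{t:mainres1}) in which both connecting paths are single edges. Since \((j,i)\in E(\P)\), the sequence \(P_{G}^{j\to i} = j,(j,i),i\) is a \(j\to i\) path with \(\abs{P_{G}^{j\to i}} = 0\), and since \((i,j)\in E(\P)\), the sequence \(P_{G}^{i\to j} = i,(i,j),j\) is an \(i\to j\) path with \(\abs{P_{G}^{i\to j}} = 0\). As noted immediately after Definition \ref{d:comm}, all the associated commutators \(F_{j\to i,i}^{p,\delta}\), \(F_{j\to i,j}^{q,\delta}\), \(F_{i\to j,i}^{p,\delta}\), \(F_{i\to j,j}^{q,\delta}\) vanish, so one may take \(\varepsilon_{j\to i,i} = \varepsilon_{j\to i,j} = \varepsilon_{i\to j,i} = \varepsilon_{i\to j,j} = 0\). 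Then \eqref{e:maincondn5}--\eqref{e:maincondn6} hold trivially, every summand inside the large parentheses of \eqref{e:maincondn7} is zero, and \eqref{e:maincondn7} collapses to \(\rho e^{\lambda m}\leq 1\), which is guaranteed by the standing hypothesis \eqref{e:maincondn1}. Corollary \ref{cor:mainres2} then produces a periodic \(\sigma\in\Sw\) under which \eqref{e:swsys} is GES, whence \eqref{e:swsys} is stabilizable.

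Equivalently, and more concretely, one can simply reuse the alternating switching signal already exhibited in \S\ref{s:prelims}: let \(\sigma\) activate subsystem \(j\) for \(q\) steps, then subsystem \(i\) for \(p\) steps, and repeat this pattern indefinitely. Because \((j,i),(i,j)\in E(\P)\) and \(p,q\in\{\delta,\delta+1,\ldots,\Delta\}\), this \(\sigma\) lies in \(\Sw\), and along it the sampled state satisfies \(x(\overline{t}+1) = \A x(\overline{t})\) on the time scale \(\overline{t} : t = (p+q):1\) with \(\A = A_{i}^{p}A_{j}^{q}\) Schur stable; GES is then immediate from Fact \ref{fact:m_defn}.

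There is no genuine obstacle in this corollary: the only points requiring care are that each prescribed switch is an edge of \(G(\P,E(\P))\) (true by the hypothesis \((i,j),(j,i)\in E(\P)\)) and that each prescribed dwell time lies in \(\{\delta,\ldots,\Delta\}\) (true since \(p,q\in\{\delta,\ldots,\Delta\}\) by Assumption \ref{a:stable_combi}). Accordingly, essentially all of the work in \eqref{e:maincondn7} trivializes, and the proof reduces to recording these two admissibility checks together with one invocation of Corollary \ref{cor:mainres2} (or, in the self-contained variant, Fact \ref{fact:m_defn}).
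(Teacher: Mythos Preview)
Your proof is correct and follows essentially the same route as the paper: take the single-edge paths \(P_{G}^{j\to i}=j,(j,i),i\) and \(P_{G}^{i\to j}=i,(i,j),j\), observe that all the commutators (and hence the \(\varepsilon\)'s) vanish so that \eqref{e:maincondn7} reduces to \eqref{e:maincondn1}, and invoke Corollary~\ref{cor:mainres2}. Your additional self-contained variant via the alternating signal and Fact~\ref{fact:m_defn} is a fine supplementary remark but is not needed for the argument.
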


   \begin{proof}
        Since \((i,j)\) and \((j,i)\in E(\P)\), the graph \(G(\P,E(\P))\) admits a pair of \(j\to i\) path and \(i\to j\) path, \((P_{G}^{j\to i},P_{G}^{i\to j})\), that satisfy \(\abs{P_{G}^{j\to i}} = 0\), \(F_{j\to i,i} = 0\), \(F_{j\to i,j} = 0\) and \(\abs{P_{G}^{i\to j}} = 0\), \(F_{i\to j,i} = 0\), \(F_{i\to j,j} = 0\). The assertion of Corollary \ref{cor:mainres5} follows at once from Corollary \ref{cor:mainres2}.
   \end{proof}

   \begin{rem}
   \label{rem:compa1}
   \rm{
        A vast body of the literature relies on state-dependent switching signals for the stabilization of \eqref{e:swsys}, see e.g., the recent works \cite{Heemels2017,Fiacchini2018} and the references therein. In contrast, \cite{Fiacchini2016} proposes (among others) purely time-dependent (in particular, periodic) switching signals that ensure GES of \eqref{e:swsys}. In general, determining the existence of a Schur stable combination, \(A_{\overline{i}}^{\overline{p}}A_{\overline{j}}^{\overline{q}}\) for some \(\overline{i},\overline{j}\in\P\) and \(\overline{p},\overline{q}\in\N\), or the absence of it, is not a numerically easy task. Indeed, the choice of \(\overline{p},\overline{q}\) to check are many. A necessary and sufficient condition for the existence of stabilizing periodic switching signals based on LMIs is also provided in \cite{Fiacchini2016}. In our work, the scalars \(p\), \(q\) have only a finite number of choices, and hence the number of operations needed to determine an \(\A\), if exists, is bounded. The proposed set of stabilizing switching signals in this paper is purely time-dependent and not restricted to periodic constructions. However, our stability conditions are only sufficient and their non-satisfaction does not imply the non-existence of a switching signal \(\sigma\in\Sw\) under which \eqref{e:swsys} is GES.
   }
   \end{rem}

   \begin{rem}
    \label{rem:compa2}
    \rm{
    	The usage of matrix inequalities in the context of stabilization of \eqref{e:swsys} is vast. For instance, stability of \eqref{e:swsys} under the so-called min switching signal is guaranteed if the subsystems matrices \(A_{\ell}\), \(\ell\in\P\), satisfy the Lyapunov-Metzler inequalities \cite{Geromel2006} or the S-procedure characterization \cite{Heemels2017}. These are BMIs and are numerically difficult to verify. In addition, the necessary and sufficient condition for the existence of a stabilizing periodic switching signal proposed in \cite{Fiacchini2016} involves satisfaction of LMI-based conditions. In contrast, our stability conditions rely on scalar inequalities involving upper bounds on the Euclidean norms of a set of commutators of certain products of the subsystems matrices, a set of scalars obtained from the properties of these matrices, and the rate of decay of a Schur stable matrix combination formed by any two of the subsystems matrices. These scalar inequalities are numerically easier to verify than the matrix inequalities based stability conditions.
    }
    \end{rem}

    \begin{rem}
    \label{rem:compa4}
    \rm{
        Stabilizability of the switched system \eqref{e:swsys} under pre-specified constraints on the set of admissible switching signals is addressed earlier in \cite{Fiacchini2018}. However, classes of state-dependent switching signals are considered, and geometric properties of certain sets are employed to provide necessary and sufficient conditions for recurrent stabilizability, which in turn is a sufficient condition for stabilizability. In contrast, we deal with purely time-dependent and not necessarily periodic switching signals, and present a set of stability conditions that involves scalar inequalities.
        }
    \end{rem}

    \begin{rem}
    \label{rem:compa3}
    \rm{
        Commutation relations between the subsystems matrices or certain products of these matrices have been employed to study stability of the switched system \eqref{e:swsys} under arbitrary switching \cite{Narendra1994,Agrachev2012}, minimum dwell time switching \cite{stab_min_dwell}, restrictions on admissible switches between the subsystems \cite{stab_cycle}, and restrictions on admissible minimum and maximum dwell times on the subsystems \cite{stab_min_max_dwell} earlier in the literature. Our work in the current paper differs from the earlier works in the following aspects:
        \begin{enumerate}[label = \arabic*), leftmargin = *]
            \item The works \cite{Narendra1994, Agrachev2012, stab_min_dwell} consider all subsystems to be stable, while the works \cite{stab_cycle,stab_min_max_dwell} assume the existence of at least one stable subsystem. In this paper we consider all subsystems to be unstable and combine the restrictions on admissible switches between the subsystems \cite{stab_cycle} and admissible dwell times on the subsystems \cite{stab_min_max_dwell}. Notice that even though we assume the existence of a Schur stable combination formed by two of the subsystems matrices, our results do not rely on unrestricted switches between them. Consequently, the results of \cite{stab_cycle,stab_min_max_dwell} do not extend to our setting with a Schur stable subsystem replaced by a Schur stable combination of subsystems in a straightforward manner.
            \item The work \cite{stab_min_max_dwell} allows some (but not all) systems in the family \eqref{e:family} to be unstable, and the proposed class of stabilizing switching signals does not allow consecutive activation of unstable subsystems. In contrast, the sets of stabilizing switching signals studied in this paper are not restricted to obey a maximum number of subsystems that can be activated between the two subsystems forming a Schur stable combination as long as a favourable path between them is chosen.
            \item The type of matrix commutators used in our analysis differs from the earlier works. Indeed, the works \cite{Narendra1994,Agrachev2012,stab_cycle} consider commutators between individual subsystems matrices, while the works \cite{stab_min_dwell,stab_min_max_dwell} use commutators between products of individual subsystems matrices. In contrast, we employ commutators of products of different subsystems matrices and products of individual subsystems matrices. The choice of the commutator under consideration is, however, not unique. The crux of the (matrix) commutation relations based stability analysis for switched systems lies in splitting matrix products into sums and applying combinatorial arguments on them, see \cite{Agrachev2012}, where this analysis technique was introduced. One may employ different choices of commutators to split a matrix product into sums, thereby leading to different sets of sufficient solutions to Problem \ref{prob:mainprob}.
        \end{enumerate}
        }
    \end{rem}

    \begin{rem}
    \label{rem:compa5}
    \rm{
         We explained in \S\ref{s:prob_stat} why Problem \ref{prob:mainprob} does not admit a trivial solution even when the family of systems \eqref{e:family} admits at least one stable subsystem. It is, therefore, only natural to ask if our results are also helpful to cater to the said setting. Let \(\P_{S}\) and \(\P_{U}\) denote the sets of indices of the stable and unstable subsystems in \eqref{e:family}, respectively, \(\P = \P_{S}\sqcup\P_{U}\). Fix \(i\in\P_{S}\), \(j=i\) and \(p,q\in\{\delta,\delta+1,\ldots,\Delta\}\). Clearly, the matrix \(\A\), defined in Assumption \ref{a:stable_combi} is Schur stable and Fact \ref{fact:m_defn} holds. Notice that \((i,i)\notin E(\P)\) as that contradicts the restriction on maximum dwell time on the subsystems. As a result, one always has to rely on favourable paths in the sense of Theorem \ref{t:mainres1} and Corollary \ref{cor:mainres2}, which may consist of either all stable or all unstable or both stable and unstable subsystems, for stabilizability of \eqref{e:swsys}. Our earlier work \cite{stab_min_max_dwell} on stability of \eqref{e:swsys} under pre-specified restrictions on admissible dwell times on the subsystems, deals with stabilizing switching signals that do not activate unstable subsystems consecutively. The technique to solve Problem \ref{prob:mainprob} described in this remark leads to a more general set of stabilizing switching signals in the sense that admissible switches between the subsystems are pre-specified and no restriction on the number of consecutive activation of unstable subsystems is imposed.
        }
    \end{rem}

    \begin{rem}
    \label{rem:compa6}
    \rm{
        Algorithmic design of switching signals that preserve stability of switched nonlinear systems in the presence of exogenous inputs, under pre-specified restrictions on admissible switches between the subsystems and admissible dwell times on the subsystems, is studied by employing multiple Lyapunov-like functions \cite{Branicky'98} and graph-theoretic tools in \cite{kun2018}. The proposed method assumes the existence of at least one stable subsystem, and involves constructing negative weight cycles on the underlying weighted directed graph of a switched system. The existence of these cycles depends on the existence of Lyapunov-like functions corresponding to the subsystems that satisfy certain conditions individually and among themselves. Given a family of systems, designing such functions is, in general, a numerically difficult problem. In contrast, in this paper we consider all subsystems to be unstable and do not rely on verifying if suitable Lyapunov-like functions exist for a given family of systems. Our stability conditions are, however, limited to the setting of linear subsystems.
    }
    \end{rem}
%===============================================================
%===============================================================
\section{Numerical experiment}
\label{s:numex}
    Consider a family of systems \eqref{e:family} with \(N = 4\). We generate the matrices \(A_{\ell}\in\R^{2\times 2}\), \(\ell\in\P\), by selecting elements from the interval \([-1,1]\) uniformly at random. It is ensured that all the matrices are unstable. The numerical values of \(A_{\ell}\), \(\ell\in\P\) along with their eigenvalues are furnished in Table \ref{tab:dataset1}. We have \(M=1.41\).
    \begin{table}[htbp]
	\centering
	\begin{tabular}{|c | c | c|}
		\hline
		\(\ell\) & \(A_{\ell}\) & eigenvalues of \(A_{\ell}\)\\
		\hline
		\(1\) & \(\pmat{0.796323 &  -0.9122466\\-0.7126696 &  0.1040671}\) & \(1.3276544,-0.4272643  \)\\
		\hline
		\(2\) & \(\pmat{0.9660338 & -0.972049\\-0.6582197 & -0.94077}\) & \(  1.2571386,-1.2318748\)\\
		\hline
		\(3\) & \(\pmat{-0.5085495 & -0.6519882\\-0.7370684 & -0.5013346}\) & \( -1.1981757,0.1882916  \)\\
		\hline
		\(4\) & \(\pmat{-0.990773 &  0.8742857\\0.780567 & 0.9401844}\) & \(-1.2959586,1.24537  \)\\
		\hline
	\end{tabular}
	\caption{Description of the subsystems matrices}\label{tab:dataset1}
	\end{table}

    Let \(E(\P) = \{(1,2),(2,1),(2,3),(3,2)\),\((3,4)\),\((4,1)\) and \(\delta = 2\), \(\Delta = 3\). Notice that unrestricted switches between the subsystems \(1\) and \(2\), and \(2\) and \(3\) are allowed. However, they do not form a stable combination in the sense of Assumption \ref{a:stable_combi}.

    We compute that Assumption \ref{a:stable_combi} holds with \(i=1\), \(j=3\) and \(p=q=2\). Indeed, the matrix \(\A = A_{1}^{2}A_{3}^{2} = \pmat{0.3379075 &  0.2444357\\
   0.0176692  & 0.061251 }\) has eigenvalues \(0.3527252\) and \(0.0464333\). We obtain \(m=1\) and \(\rho = 0.42\). Let \(\lambda = 0.0001\). It follows that \(\mb = 1\) and \(\rho e^{\lambda m} = 0.42 < 1\). However, both \((i,j)\) and \((j,i)\notin E(\P)\). So, our results are useful to cater to this setting.

    We construct a directed graph, \(G(\P,E(\P))\), as shown below.
    \begin{center}
            \scalebox{1}{
        \begin{tikzpicture}[every path/.style={>=latex},base node/.style={draw,circle}]
            \node[base node]            (a) at (1,1)  { \(2\) };
            \node[base node]            (b) at (1,-1)  { \(4\) };
            \node[base node]            (c) at (2,0) { \(3\) };
            \node[base node]            (d) at (0,0) { \(1\) };

	        \draw[->] (a) edge (c);
            \draw[->] (c) edge[bend right] (a);
	        \draw[->] (a) edge[bend right] (d);
            \draw[->] (d) edge (a);
            \draw[->] (b) edge (d);
            \draw[->] (c) edge (b);
        \end{tikzpicture}}
    \end{center}

    Fix
    \begin{align*}
        P_{G}^{j\toone i} &= 3,(3,2),2,(2,1),1,\\
        P_{G}^{j\totwo i} &= 3,(3,4),4,(4,1),1,\\
        \intertext{and}
        P_{G}^{i\toone j} &= P_{G}^{i\totwo j} = 1,(1,2),2,(2,3),3.
    \end{align*}
    We have \(\abs{P_{G}^{j\toone i}} = \abs{P_{G}^{j\totwo i}} = \abs{P_{G}^{i\toone j}} = \abs{P_{G}^{i\totwo j}} = 1\),
    \begin{align*}
        \norm{F_{j\toone i,i}^{p,\delta}} &= \norm{A_{1}^{2}A_{2}^{2}-A_{2}^{2}A_{1}^2} = 0.02,\\
        \norm{F_{j\totwo i,i}^{p,\delta}} &= \norm{A_{1}^{2}A_{4}^{2}-A_{4}^{2}A_{1}^2} = 0.05,\\
        \norm{F_{j\toone i,j}^{q,\delta}} &= \norm{A_{3}^{2}A_{2}^{2}-A_{2}^{2}A_{3}^2} = 0.04,\\
        \norm{F_{j\totwo i,j}^{q,\delta}} &= \norm{A_{3}^{2}A_{4}^{2}-A_{4}^{2}A_{3}^2} = 0.08,\\
        \norm{F_{i\toone j,i}^{p,\delta}} &= \norm{F_{i\totwo j,i}^{p,\delta}} = \norm{A_{1}^{2}A_{2}^{2}-A_{2}^{2}A_{1}^2} = 0.02,\\
        \norm{F_{i\toone j,j}^{q,\delta}} &= \norm{F_{i\totwo j,j}^{q,\delta}} = \norm{A_{3}^{2}A_{2}^{2}-A_{2}^{2}A_{3}^2} = 0.04,\\
        \xi_{j\tor i,i} = \xi_{j\tor i,j} &= \xi_{i\tor j,i} = \xi_{i\tor j,j} = 4,\:\:r=1,2,\\
        \intertext{and}
        \xi_{j\toone i\toone j} &= 8,\:\:\xi_{j\totwo i\totwo j} = 0.
    \end{align*}

    Consequently, the left-hand side of condition \eqref{e:maincondn4} is
    \begin{align*}
        &0.42 + \biggl(0+1\times 1.41^{4}\times (0.04+0.02+0.04)\biggr)e^{0.0008}\\
        =&\:\: 0.82 < 1.
    \end{align*}

    The assertion of Theorem \ref{t:mainres1} holds and the switched system \eqref{e:swsys} is GES under a switching signal \(\sigma\in\Sw\) that activates the sequence of subsystems \(3,2,1,2\), followed by \(s\)-many instances of the sequence of subsystems \(3,4,1,2\), \(s=1,2,3,\ldots\), repeatedly with dwell time \(2\) units of time on all subsystems. \(\sigma\) is illustrated in Figure \ref{fig:ex_fig3}. We generate \(100\) different initial conditions \(x_{0}\) from the interval \([-1,1]^{2}\) uniformly at random. The corresponding \((\norm{x(t)})_{t\in\N_{0}}\) for the switched system \eqref{e:swsys} with \(\sigma\) as described above, are plotted in Figure \ref{fig:ex_fig2}. GES of \eqref{e:swsys} is observed.
     \begin{figure}[htbp]
    \begin{center}
        \includegraphics[scale = 0.4]{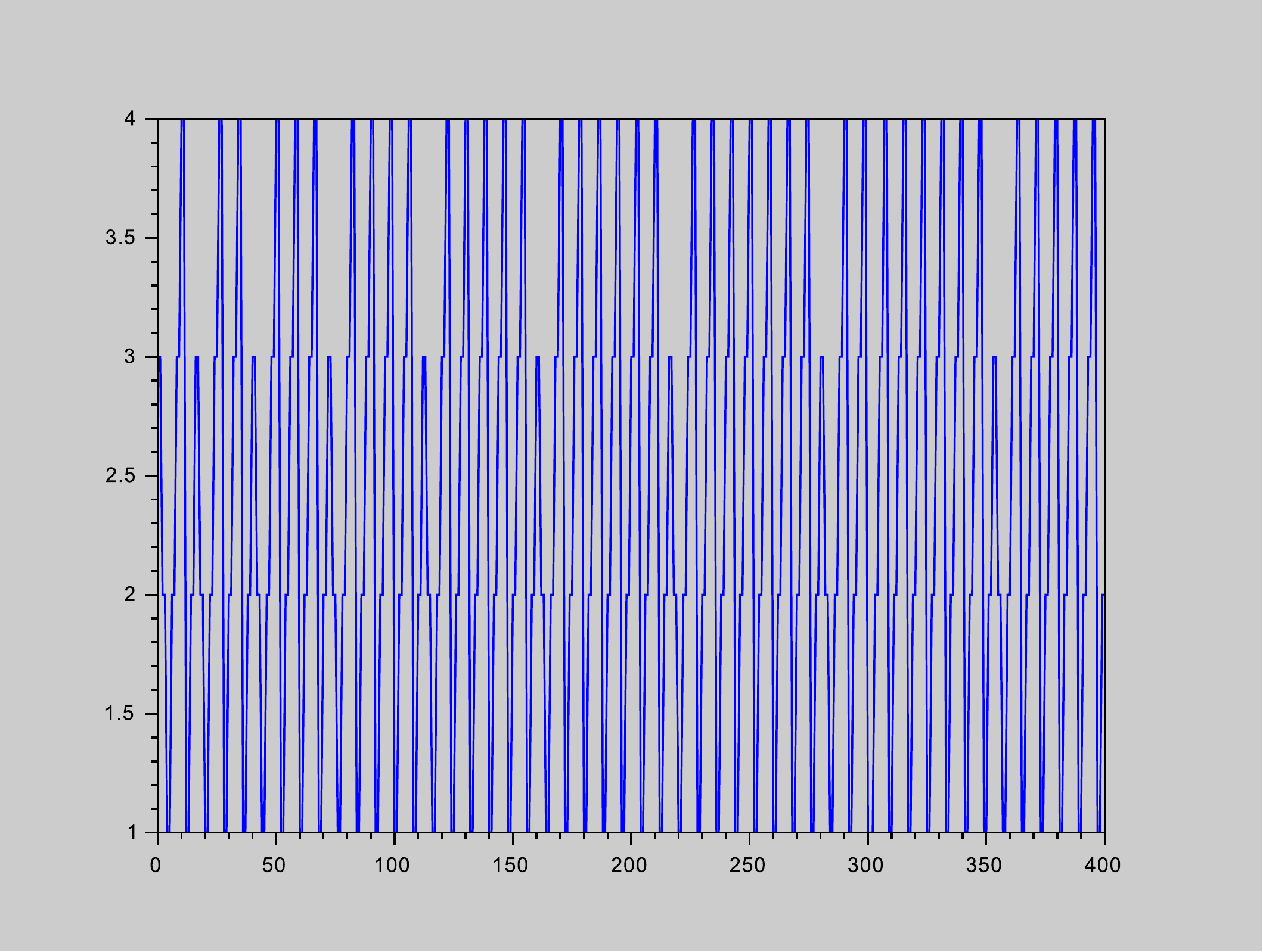}
        \caption{\((\sigma(t))_{t\in\N_{0}}\) obtained from our experiment.}\label{fig:ex_fig3}
    \end{center}
    \end{figure}

    \begin{figure}[htbp]
    \begin{center}
        \includegraphics[scale=0.4]{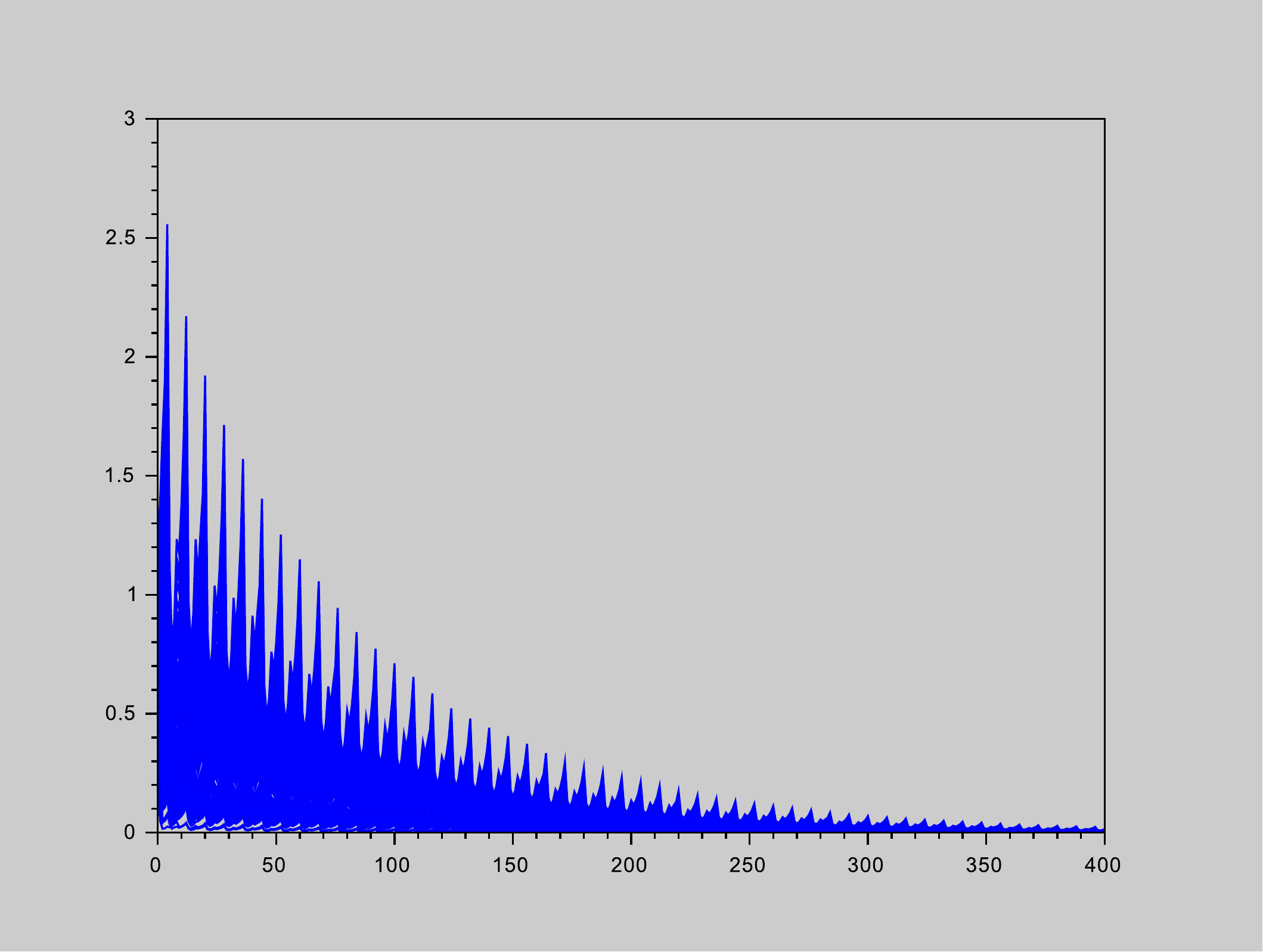}
        \caption{\((\norm{x(t)})_{t\in\N_{0}}\) for \eqref{e:swsys} under the \(\sigma\) in Figure \ref{fig:ex_fig3}.}\label{fig:ex_fig2}
    \end{center}
    \end{figure}

%==============================================================
%==============================================================
\section{Concluding remarks}
\label{s:concln}
    In this paper we studied stabilizability of discrete-time switched linear systems under restricted switching. Given restrictions on the set of admissible switches between the subsystems and admissible dwell times on the subsystems, we provided a set of sufficient conditions on the subsystems matrices such that there exist purely time-dependent switching signals that obey the given restrictions and ensure GES of the resulting switched system. Matrix commutators between certain products of the subsystems matrices and graph-theoretic arguments are employed as the main apparatuses for our analysis. Our stability conditions are derived in the premise of the existence of a suitable Schur stable combination formed by any two of the subsystems matrices. A next natural question is to address stabilizability of a switched system under purely time-dependent restricted switching signals when the subsystems matrices may not admit Schur stable combinations. This matter is currently under investigation and will be reported elsewhere.
%==============================================================
%==============================================================
%\bibliographystyle{siam}
%\bibliography{refr_ak}

\end{document}